\def\BibTeX{{\rm B\kern-.05em{\sc i\kern-.025em b}\kern-.08em
    T\kern-.1667em\lower.7ex\hbox{E}\kern-.125emX}}
\newtheorem{theorem}{Theorem}
\newtheorem{definition}[theorem]{Definition}
\newtheorem{lemma}[theorem]{Lemma}
\newcommand{\dup}{\delta_\uparrow}
\newcommand{\ddo}{\delta_\downarrow}
\newcommand{\dupdo}{\delta_{\uparrow/\downarrow}}
\newcommand{\dmin}{\delta_{\mathrm{min}}}
\newcommand{\dupinfty}{\delta^\uparrow_\infty}
\newcommand{\ddoinfty}{\delta^\downarrow_\infty}
\newcommand{\idm}{\text{IDM}}
\newcommand{\cidm}{\text{CIDM}}
\newcommand{\etam}{$\eta$-\text{IDM}}
\newcommand{\etacidm}{$\eta$-\text{CIDM}}
\newcommand{\spice}{\text{SPICE}}
\newcommand{\vth}{V_{th}}
\newcommand{\vthin}{V_{th}^{in}}
\newcommand{\vdd}{V_{DD}}
\newcommand{\etp}{\eta^+}
\newcommand{\etpd}{\eta^{+'}}
\newcommand{\etm}{\eta^-}
\newcommand{\etmd}{\eta^{-'}}
\newcommand{\etpinfty}{\eta^+_\infty}
\newcommand{\etminfty}{\eta^-_\infty}
\newcommand{\etpmin}{\eta^+_{min}}
\newcommand{\etminf}{\eta^-_{\infty}}
\newcommand{\etpinf}{\eta^+_{\infty}}
\newcommand{\etmmin}{\eta^-_{min}}
\newcommand{\dupsx}[1]{{\delta_\uparrow^*}_{#1}}
\newcommand{\ddosx}[1]{{\delta_\downarrow^*}_{#1}}
\newcommand{\dupsdef}{\dupsx{def}}
\newcommand{\ddosdef}{\ddosx{def}}
\newcommand{\cDelta}{\Delta'}
\newcommand{\cP}{P'}
\newcommand{\rhp}{\rho^+}
\newcommand{\rhm}{\rho^-}
\newcommand{\bdup}{\overline{\delta}_\uparrow}
\newcommand{\bddo}{\overline{\delta}_\downarrow}
\newcommand{\vthinm}{V_{th}^{in*}}
\newcommand{\vthoutm}{V_{th}^{out*}}
\newcommand{\oDelta}{\overline{\Delta}}
\newcommand{\Deltapm}{\Delta^{+/-}}
\newcommand\copyrighttext{%
	\footnotesize \textcopyright 2023 IEEE. Personal use of this material is 
	permitted.
	Permission from IEEE must be obtained for all other uses, in any current or 
	future 
	media, including reprinting/republishing this material for advertising or 
	promotional 
	purposes, creating new collective works, for resale or redistribution to 
	servers or 
	lists, or reuse of any copyrighted component of this work in other works. 
}
\newcommand\copyrightnotice{%
	\begin{tikzpicture}[remember picture,overlay]
		\node[anchor=south,yshift=10pt] at (current page.south) 
		{\fbox{\parbox{\dimexpr\textwidth-\fboxsep-\fboxrule\relax}{\copyrighttext}}};
	\end{tikzpicture}%
}
\begin{document}
%

\title{A Digital Delay Model Supporting Large Adversarial Delay 
Variations
\thanks{This research was supported by the Austrian Science Fund (FWF) project 
DMAC (grant no. P32431).}
}

\author{
	\IEEEauthorblockN{
		Daniel \"Ohlinger, 
		Ulrich Schmid
	}
	\IEEEauthorblockA{Embedded Computing Systems Group (E191-02)\\
		TU Wien, Vienna, Austria \\
		\{doehlinger, s\}{@}ecs.tuwien.ac.at}
}

\maketitle

\copyrightnotice

\begin{abstract}
Dynamic digital timing analysis is a promising alternative to analog simulations
for verifying particularly timing-critical parts of a circuit. A necessary
prerequisite is a digital delay model, which allows to accurately predict the
input-to-output delay of a given transition in the input signal(s) of a gate.
Since all existing digital delay models for dynamic digital timing analysis
are deterministic, however, they cannot cover delay fluctuations caused by PVT variations,
aging and analog signal noise. The only exception known to us is the \etam\ 
introduced
by F\"ugger et~al.~ at DATE'18, which allows to add (very) small adversarially
chosen delay variations to the deterministic involution delay model, without
endangering its faithfulness. In this paper, we show that it is possible to 
extend the range of allowed delay variations so significantly that 
realistic PVT variations and aging are covered by the resulting extended \etam.
\end{abstract}


\section{Introduction}\label{sec:intro}

Accurate signal propagation predictions are crucial for
modern digital circuit design. The highest accuracy is currently achievable by
analog simulations, e.g., using \spice. These suffer, however, from excessive
running times. A considerably more efficient alternative is
\emph{dynamic} digital timing
analysis, which traces individual signal transitions throughout a
circuit. Application examples are clock trees
or time-based encoded inter-neuron links in hardware-implemented spiking neural networks \cite{BVMRRVB19},
where the (very accurate but worst-case) delay estimates provided by
classic static timing 
analysis techniques like CCSM~\cite{Syn16:CCSM} and ECSM~\cite{Cad15:ECSM} are not sufficient 
for ensuring correct operation.

Existing digital dynamic timing analysis tools like Cadence NC-Sim,
Siemens ModelSim or Synopsis VCS rely on simple gate delay
models like pure delays (= constant delay) or 
inertial delays (= constant delay, but pulses shorter 
than some upper bound are removed)~\cite{Ung71}. However,
research has also provided more elaborate \emph{single-history} 
delay models like the \emph{degradation delay model} (DDM) \cite{BJV06} 
and, in particular, the \emph{involution delay model} (IDM) \cite{FNNS15:DATE,FNNS20:TCAD}, where 
the input-to-output delay 
$\delta(T)$ depends on the previous-output-to-input delay $T$ (see \cref{fig:single-history}).

The distinguishing feature of the IDM is that it is the only delay model
known so far that is faithful w.r.t.\ a ``canonical'' problem called 
\emph{short-pulse filtration} (SPF) 
\cite{FNS16:TCOM}\ifthenelse{\boolean{conference}}{,}{ (see \cref{def:SPF}),}
which means that 
the solvability/impossibility border for circuits specified in the IDM
matches the solvability/impossibility border for real circuits:
whereas it is possible to implement unbounded SPF in
reality, this is not the case for bounded SPF, where all output
transitions must occur within bounded time.
An essential property required for faithfulness is \emph{continuity} of the 
underlying digital channel model, in the sense that minor changes (which also include 
short glitches, at arbitrary positions) of the signal at an input of a gate 
must result in minor changes of the output signal only. It has been proved in \cite{FNS16:TCOM}
that no existing delay model (except the IDM) satisfies this continuity
property.

The defining characteristics of the IDM is a pair of differentiable concave
delay functions (monotonically increasing, with monotonically decreasing
derivative) for rising ($\dup(T)$) and falling ($\ddo(T)$) transitions
that satisfy the involution property $-\dup(-\ddo(T))=T$.
It has been proved in \cite{FNNS20:TCAD} that
this ensures continuity and hence faithfulness w.r.t.\ the SPF problem.
The IDM also comes with public domain tool support, the \emph{Involution Tool}, a simulation framework 
based on ModelSim, which has been used to demonstrate the good accuracy of 
the IDM predictions for several simple circuits~\cite{OMFS21:Integration}.
Certain deficiencies of the IDM regarding composition of circuits have been
alleviated in the \emph{Composable Involution Delay Model} (CIDM)
\cite{MOSFN21:GLSVLSI}.

One shortcoming of all existing delay models, including the original IDM and
the CIDM,
are their \emph{deterministic} delay predictions. After all, e.g.\ $\dup(T)$ is a
\emph{function} that only depends on the previous-output-to-input delay $T$. 
Delay fluctuations caused by other effects, like
PVT variations, aging, and electrical signal noise, cannot be accommodated here.
Given the prominence of these effects in modern VLSI circuits, however, 
they should be incorporated somehow. We note that this requirement is even 
more pressing in any attempt on formal verification of circuits, which aims at proving 
that a given circuit will meet its specification in \emph{every} feasible trace.

In their DATE'18 paper \cite{FMNNS18:DATE}, F\"ugger at.~al.\ introduced an 
extension of the IDM, called \etam,
which allows to add a small amount of adversarial delay variations to the 
fully deterministic delay predictions of the \idm. That is, for every signal transition,
the standard IDM delay $\dupdo(T)$ can be replaced by $\dupdo(T)+\eta$ for some
arbitrarily chosen $\eta \in [-\etm,\etp]$. This feature can of course
be used to cover any bounded-range delay fluctuation. The authors proved 
that the resulting \etam\ does not invalidate the faithfulness of the original IDM, 
provided the variation range $[-\etm,\etp]$ is (very) small. Unfortunately, however,
\spice\ simulation data revealed that even voltage variations of 1~\% and transistor
size variations of 10~\% already exceeded the allowed range.

In this paper, we show that the allowed range $[-\etm,\etp]$ can be extended
significantly. The key to our \emph{extended \etam} is to make the allowed variation range
dependent on $T$, i.e., to allow $\eta \in [-\etm(T),\etp(T)]$ depending on
the previous-output-to-input delay $T$ of the current transition. In particular,
except for very small values of $T$, the variation range may be large. 

\paragraph*{Detailed contributions}
\begin{enumerate}
\item[(1)] We define our extended \etam, by providing the constraints that
  must be satisfied by $[-\etm(T),\etp(T)]$, and prove that the
  SPF implementation already used in \cite{FMNNS18:DATE} also works
  correctly in this model.
\item[(2)] We show that it is possible to combine the extended \etam\ with the
\cidm\ \cite{MOSFN21:GLSVLSI}, which results in the \etacidm\ that provides the best of both worlds.
We also extended the Involution Tool \cite{OMFS21:Integration} appropriately.
\item[(3)] We perform elaborate analog simulations under different PVT 
variations and aging, and compare the simulated delays with the 
predicted delays of our \etacidm. 
It turns out that, unlike the model in 
\cite{FMNNS18:DATE}, our new model can capture a wide
range of such variations.
\end{enumerate} 

\paragraph*{Paper organization} In \cref{sec:etaidm}, the necessary basics for 
the \idm\ and the \etam\ are presented.
\cref{sec:extension} provides the proofs for the faithfulness for the extended
delay variation range. In \cref{sec:etacidm}, we sketch how the extended \etam\ and
the \cidm\ can be seamlessly combined. The experiments in
\cref{sec:results} demonstrate
the substantially increased coverage of our new model.
\cref{sec:conclusion} concludes our paper.

\section{The Existing \etam}\label{sec:etaidm}

In sharp contrast to all other delay models known so far, the
\idm\ is based on \emph{unbounded} single-history channels: Referring to
\cref{fig:single-history}, the gate delay $\delta(T)$ may also
become arbitrarily negative here (for specific values of the
previous-output-to-input delay $T$), which has been proved to be
mandatory for perfect glitch cancellation \cite{FNS16:TCOM}.

\begin{figure}[tbp]
	\centering
	\ifthenelse{\boolean{conference}}
	{
		\includegraphics[width=.70\linewidth]{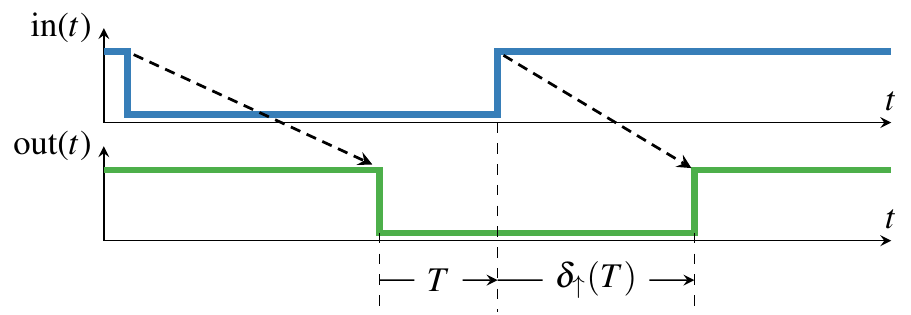}
	}
	{
		\includegraphics[width=.85\linewidth]{single_history.pdf}
	}
	\caption{Delay calculation for a single-history channel.}
	\label{fig:single-history}
\end{figure}

In order to also account for PVT variations and aging, \cite{FMNNS18:DATE} introduced
the \etam:
On top of the deterministic delay calculations of the \idm,
an adversarial delay $\eta \in 
[-\etmmin, \etpmin]$ is added. More specifically, 
the delay $\delta_n$ for the $n^{th}$ transition is calculated as 
\begin{equation}
\delta_n = \dup(\max\{{t_n - t_{n-1} - \delta_{n-1}, -\ddoinfty}\}) + \eta_n
\end{equation}
 for a rising transition, and
\begin{equation}
\delta_n = \ddo(\max\{{t_n - t_{n-1} - \delta_{n-1}, -\dupinfty}\}) + 	\eta_n
\end{equation}
for a falling transition, where $t_n$ and $t_{n-1}$ are the time of the 
current resp. previous input transition, and $\eta_n \in 
[-\etmmin, \etpmin]$.
\cref{fig:etaidmtrace} illustrates two of the infinitely many possible
output signals of a channel in the \etam, for a fixed input signal, 
where the dashed lines represent the deterministic 
delay prediction by the \idm.
It is apparent that the adversary has the power to cancel or uncancel 
pulses w.r.t.\ the deterministic prediction. 

\begin{figure}[h]
	\centering
	\ifthenelse{\boolean{conference}}
	{
		\includegraphics[width=0.8\linewidth]{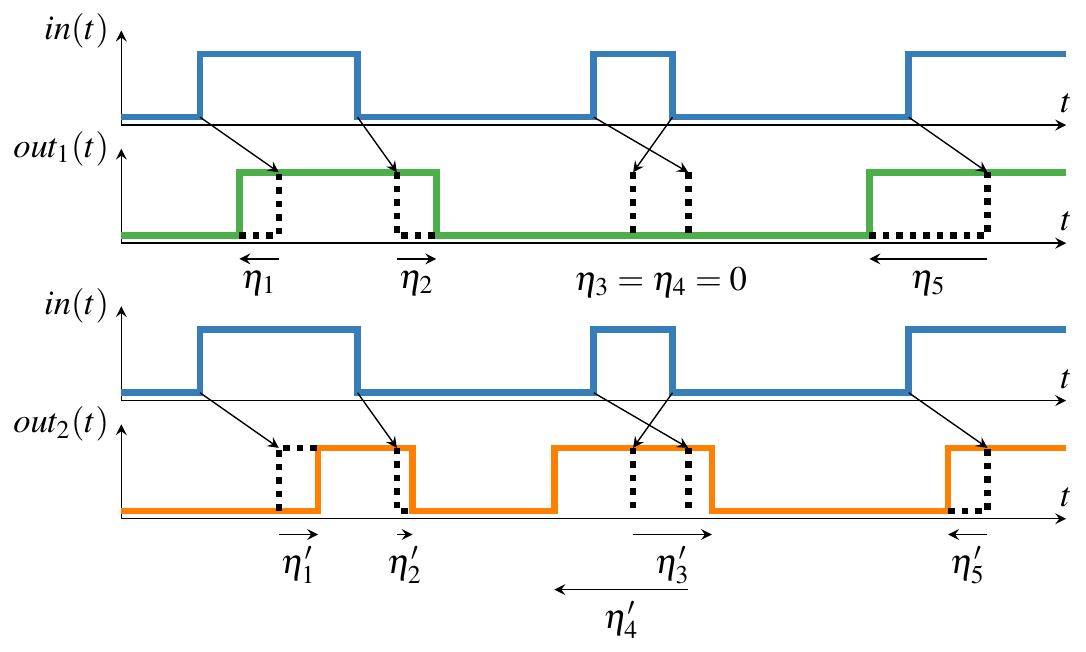}
	}
	{
		\includegraphics[width=0.9\linewidth]{eta_example_trace.pdf}
	}
	\caption{Example trace through an $\eta$-involution channel (adapted from 
		\cite{FMNNS18:DATE}).}
	\label{fig:etaidmtrace} 
\end{figure}

\ifthenelse{\boolean{conference}}
{Informally, the \emph{Short-Pulse Filtration} (SPF) problem is similar to 
building an inertial delay channel from a single input up-down pulse of 
length $\Delta_0$ starting at time $t_0$, see \cite[Def.~1]{FMNNS18:DATE} for its
formal definition.}
{
Since faithfulness is defined with respect to the \emph{Short-Pulse Filtration} (SPF) problem, we recall its definition here. Informally, it is 
similar to the problem of building an inertial delay channel from a single
input up-down pulse of length $\Delta_0$ starting at time $t_0$:

\begin{definition}[Short-Pulse Filtration, see \cite{FNNS20:TCAD}]\label{def:SPF}	
	A circuit that solves SPF needs to adhere to the following constraints:
	\begin{enumerate}
		\item[F1)] Well-formedness: The circuit has exactly one input and 
		one output port.
		\item[F2)] No generation: The circuit does not generate a pulse on 
		the output, if no pulse is on the input.
		\item[F3)] Nontriviality: The output is not the zero signal for all 
		pulses.
		\item[F4)] No short pulses: There exists an $\varepsilon > 0$ such 
		that the output signal does not contain pulses of length less than 
		$\varepsilon$.
	\end{enumerate}	
	For the bounded version of SPF, there is an additional constraint:	
	\begin{enumerate}
		\item[F5)] Bounded stabilization time: There exists a $K > 0$ such that 
		for every input pulse the last output transition is before time $t_0 + K$.
	\end{enumerate}
\end{definition}
For any input signal other than a single pulse, we allow the SPF circuit
to behave arbitrarily.

}
In order to show that the \etam\ does not allow to solve bounded SPF,
the authors of \cite{FMNNS18:DATE} used a simple reduction proof:
By setting all delay variations $\eta = 0$, the 
model degenerates to the plain \idm, where this is known to be
impossible \cite{FNNS20:TCAD}.
The other case, namely, showing that it is possible to
solve unbounded SPF in the \etam, turned out to be trickier:
For the SPF circuit given in \cref{fig:spfcircuiteta}, they analyzed 
all possible behaviors via a case distinction on the length of the input pulse
$\Delta_0$ (short, large and medium).
The most delicate case turned out to be medium pulse lengths,
where a critical (infinite)
pulse train with up-time $\Delta$ and down-time $\Delta'$ may be generated.
It occurs if the adversary maximally delays all rising transitions
(by $\etpmin$) and minimally delays all falling transitions (by $-\etmmin$),
provided the (quite restrictive) constraint 
\begin{equation}
	\etpmin + \etmmin < \ddo(-\etpmin) - \dmin\tag{C1}\label{eq:constraintc1}
\end{equation}
holds. If the adversary chooses delay variations that ever cause the
up-time of a pulse to become larger than $\Delta$, it cannot
revert to pulses with shorter up-times again, and hence cannot prohibit
the feedback loop of the OR gate to eventually lock at 1. Consequently,
a correct SPF circuit is obtained by letting the high-threshold
buffer at the output map all pulse trains with up-times less or
equal $\Delta$ to constant $0$, as all pulse trains that ever
exceed $\Delta$ will lead to a single rising transition at the output.

\begin{figure}[t]
	\centering	
	\ifthenelse{\boolean{conference}}
	{
		\includegraphics[width=0.8\linewidth]{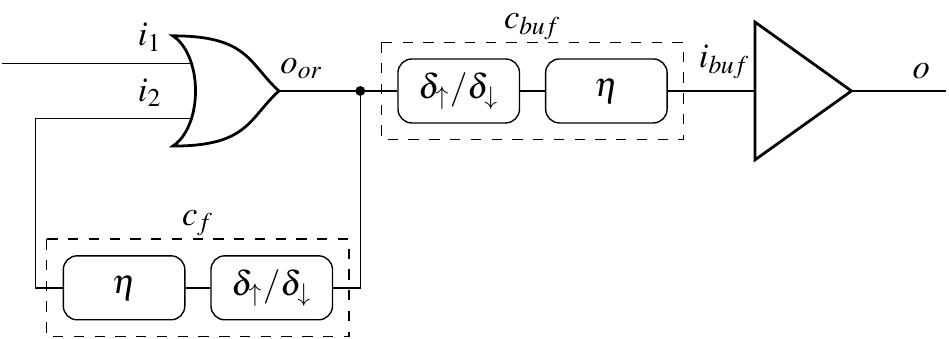}
	}
	{
		\includegraphics[width=0.9\linewidth]{unbounded_spf_circuit_etaidm.pdf}
	}
	\caption{The SPF circuit for the \etam\ used in \cite{FMNNS18:DATE}. It  consists of an OR gate with a feed-back \etam\ channel ($c_f$) and
          a subsequent high-threshold buffer implemented by another
          \etam\ channel ($c_{buf}$).}
	\label{fig:spfcircuiteta}
\end{figure}

\section{Extending the Variation Range of the \etam}\label{sec:extension}

In this section, we will show that the range of adversarial variations of the
\etam\ can be increased substantially, without sacrificing 
faithfulness with respect to the SPF problem. The crucial idea of
our extension is to introduce $T$-dependent bounds in the variation range
$[-\etm(T), \etp(T)]$,
which coincide with the original bounds $[-\etmmin,\etpmin]$ only
for two very small values of $T$, which are quite insensitive 
to delay fluctuations.

Let $\Delta$ be the up-time of the critical pulse train identified in \cite{FMNNS18:DATE} (which is determined by the fixed point of $f(.)$ given
in \cref{eq:delta-new}) and $\dmin>0$ be the unique minimum
delay value of a strictly causal\footnote{Where $\dup(0)>0$ and $\ddo(0)>0$.} 
\idm\ channel,
defined by
\begin{equation}
  \dup(-\dmin) = \dmin = \ddo(-\dmin)\label{eq:dmin}
\end{equation}
according to~\cite[Lem.~3]{FNNS20:TCAD}. 

\begin{definition}\label{def:eta}
For some given constants
	\begin{equation}
		\rhp \geq 0, \rhm \geq 0\tag{C2}\label{eq:constraintc2}
	\end{equation}
some $\etpinf\geq \etpmin$, $\etminf \geq \etmmin$,
some $\oDelta$ satisfying $\oDelta \geq \dup(-\oDelta) + \rho^+ \cdot (\oDelta - \Delta) + 
	\etpmin$, 
and 	$\Delta' = \dup(-\Delta) + \etpmin - \Delta$, let	
	
	\begin{equation}
		\etp(T) =
		\begin{cases}
			\etpinf  & 
			\text{for}\ T < -\overline{\Delta},\\
			\rhp \cdot (-T - \Delta) + \etpmin & 
			\text{for}\ -\oDelta \leq T \leq -\Delta,\\
			\etpinf & 
			\text{for}\ T > -\Delta,\label{eq:etapdef}
		\end{cases}
	\end{equation}

	\begin{equation}
		\etm(T) =
		\begin{cases}
			\etminf  & 
			\text{for}\ T < -\Delta',\\
			\rhm \cdot (-T + \Delta') + \etmmin & 
			\text{for}\ - \Delta' \leq T  
			< 0,\\
			\etminf & 
			\text{for}\ T \geq 0.\label{eq:etamdef}
		\end{cases}
	\end{equation}
\end{definition}

\cref{fig:boundcomp} shows the comparison of the old and the new bounds for the 
\etam. Note carefully that the only values of $T$
for which the bounds could not be enlarged are $-\Delta$ and
$-\Delta'$, i.e., the up-time and the
down-time of the critical infinite pulse train
identified in \cite{FMNNS18:DATE}.
For the inverter used in \cref{sec:results}, the actual values are 
		$\etpmin \approx \etmmin \approx \SI{63}{\fs}, 
		\etpinf \approx \etminf \approx \SI{1.47}{\ns}, 
		\Delta \approx \SI{191}{\fs}, 
		\Delta' \approx \SI{63}{\fs}$ and 
		$\oDelta = \SI{230}{\fs}$.
\ifthenelse{\boolean{conference}}{}{Both $\etpinf$ and $\etminf$ are hence approximately $20$ 
times larger than the original bounds $\etpmin$ and $\etmmin$ here.}

\begin{figure}[t]
	\centering
	\ifthenelse{\boolean{conference}}
	{
		\includegraphics[width=0.90\linewidth]{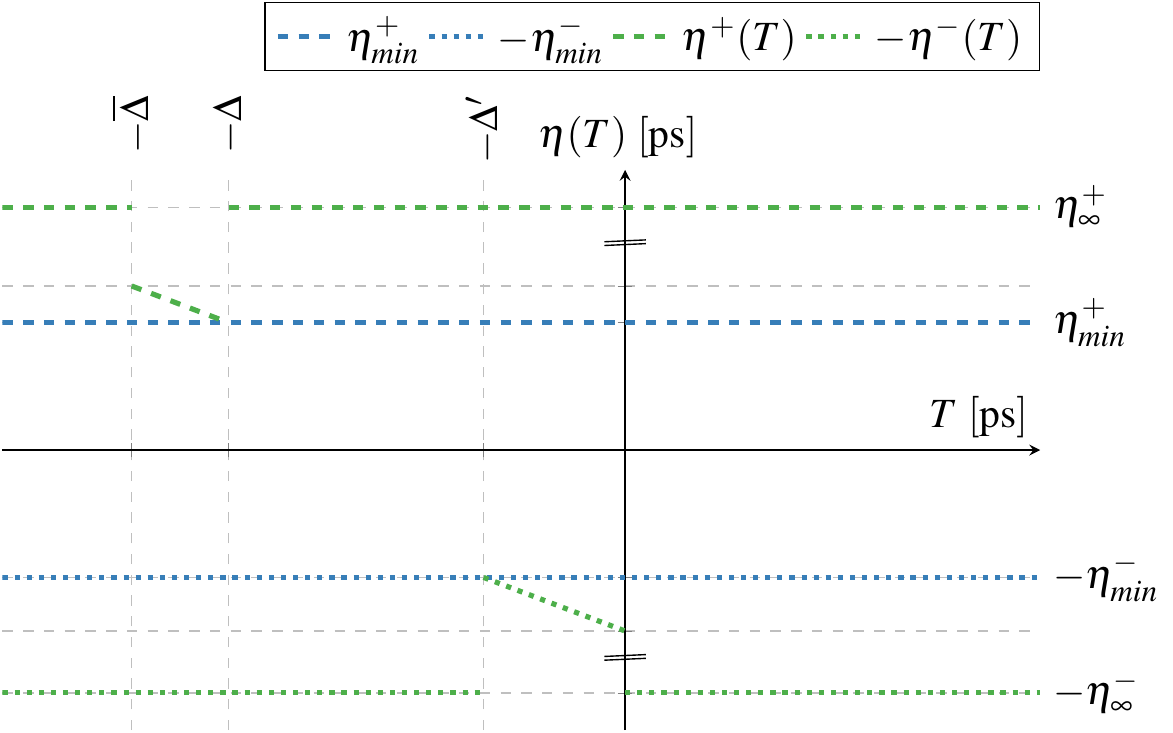}
	}
	{
		\includegraphics[width=1\linewidth]{boundary_idea_new.pdf}
	}
	\caption{Comparison of the old bounds (blue) and the new 
	  bounds (green). Note the discontinuity sign
          on the y-axis. 
		For 
		the example circuit used in our evaluation in \cref{sec:results}, $\etpinfty \approx \etminf 
		\approx 20 \etpmin \approx 20 \etmmin$.		
	}
	\label{fig:boundcomp} 
\end{figure}

For proving the faithfulness of the extended \etam,
we show:
\begin{itemize}
	\item[G1)] There is no circuit with extended \etam-channels that can solve bounded 
	SPF.
	\item[G2)] There exists a circuit with extended \etam-channels that can solve 
	unbounded 
	SPF.
\end{itemize}

For G1), the exact same proof as in the original \etam\ paper 
\cite{FMNNS18:DATE}
can be used: Setting all the adversarial delays $\eta_n = 0$ lets the 
\etam-channel degenerate to an \idm-channel, which does not allow
solve bounded SPF \cite{FNNS20:TCAD}.

The intricate part is again the proof of G2), where we
split the range of the input pulse lengths $\Delta_0$ into
three intervals. 

\begin{lemma}\label{lemma:longpulses}
  If the input pulse length satisfies $\Delta_0 \geq \dupinfty + \etpinfty$ for
  $\dupinfty = \lim_{T\to\infty}\dup(T)$, 
	then the output of the OR gate $o_{or}$ in \cref{fig:spfcircuiteta} has a 
	rising transition at time $0$ and no falling transitions.
\end{lemma}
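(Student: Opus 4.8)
The plan is to show that for a sufficiently long input pulse, the feedback OR gate locks at $1$ and never returns to $0$, so its output $o_{or}$ consists of a single rising transition and no falling transition. The key intuition is that the input pulse has length $\Delta_0 \geq \dupinfty + \etpinfty$, which is long enough that, by the time the falling edge of the input arrives, the feedback channel $c_f$ has already driven its own input high and that high value has propagated back into the OR gate, keeping it at $1$ regardless of what the external input does afterwards.

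First I would set up the timeline. The external input pulse rises at time $t_0$ (which the lemma normalizes to $0$ for the OR output) and falls at time $t_0 + \Delta_0$. At $t_0$ the OR gate sees a rising input, so $o_{or}$ has a rising transition; because this is the first transition, the previous-output-to-input delay $T$ takes its saturated value, giving delay $\dup(-\ddoinfty) + \eta$ at most $\dupinfty + \etpinfty$. I would then trace this rising transition through the feedback channel $c_f$: it reaches the feedback input of the OR gate after a delay bounded by $\dupinfty + \etpinfty \leq \Delta_0$. The crucial comparison is that this feedback-induced rising transition arrives at or before time $t_0 + \Delta_0$, i.e. before (or exactly when) the external input falls. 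Hence at every moment from $t_0$ onward, at least one input of the OR gate is high — initially the external input, and then, before it can drop, the fed-back signal — so the OR gate output stays at $1$ and produces no falling transition.

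The main obstacle I anticipate is making the edge-ordering argument rigorous in the single-history delay framework, specifically bounding the feedback delay correctly. One must verify that the delay of the rising transition through $c_f$ is genuinely at most $\dupinfty + \etpinfty$: the IDM delay $\dup(\cdot)$ is monotonically increasing and bounded above by its limit $\dupinfty$, and the adversarial additive term is at most $\etp(T) \leq \etpinfty$ by \cref{def:eta} (where $\etp(T)$ equals $\etpinf$ outside the interval $[-\oDelta,-\Delta]$ and is bounded by $\etpinf \geq \etpmin$ on it, since $\rhp \geq 0$ by \eqref{eq:constraintc2} and the linear term is controlled). I would need to confirm that even in the worst adversarial case the feedback high value arrives no later than the external falling edge, which is exactly where the hypothesis $\Delta_0 \geq \dupinfty + \etpinfty$ is used tightly.

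Finally I would close the argument by induction on subsequent transitions: once the OR output has locked at $1$ through the self-sustaining feedback loop, any future rising transition fed back simply re-confirms the high input, and no falling transition of $o_{or}$ is ever enabled, because a falling transition would require \emph{both} OR inputs to be low simultaneously, which the locked feedback prevents. Therefore $o_{or}$ has exactly one rising transition, at time $0$, and no falling transitions, as claimed. The delicate quantitative step is the delay bound in the third paragraph; the rest is a straightforward propagation-and-locking argument.
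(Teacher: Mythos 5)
Your proof is correct and follows essentially the same route as the paper's: the feedback rising transition reaches the OR gate's second input within $\dupinfty + \etpinfty \leq \Delta_0$, i.e.\ no later than the external falling edge, so the loop locks at $1$. (One small imprecision: the worst-case delay bound comes from $\dup(T) \leq \dupinfty$ by monotonicity for \emph{any} $T$, not from the lower saturation value $\dup(-\ddoinfty)$, but your subsequent bound is exactly right.)
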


\ifthenelse{\boolean{conference}}
{
	\begin{proof}
		The rising transition from the input $i_1$ arrives at latest at $t_1' 
		\leq 
		\dupinfty + \etpinfty$ at $i_2$ (one can assume that the OR gate has
		been initialized to output 0 at time $t=-\infty$).
		Therefore, the falling transition on 
		input $i_1$ happens at or after $t_1'$, so has no influence on the 
		output of the OR gate: the feedback loop hence locks at constant $1$.
	\end{proof}
}
{
	\begin{proof}
		The rising transition from the input $i_1$ arrives at latest at $t_1' 
		\leq 
		\dupinfty + \etpinfty$ at $i_2$ (one can assume that the OR gate has
		been initialized to output 0 at time $t=-\infty$).
		Therefore, the falling transition on 
		input $i_1$ happens at or after $t_1'$, so has no influence on the 
		output of the OR gate: the feedback loop hence locks at constant $1$.
	\end{proof}
}

\begin{lemma}\label{lemma:shortpulses}
	If the input pulse length satisfies $\Delta_0 \leq \dupinfty - \dmin - 
	\etpinfty - \etminfty$, then the output pulse of the OR gate $o_{or}$ in 
	\cref{fig:spfcircuiteta} only contains the input pulse.
\end{lemma}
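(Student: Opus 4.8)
The plan is to track the three relevant signals of the circuit in \cref{fig:spfcircuiteta}: the external input $i_1$ (the single pulse, rising at $t_0$ and falling at $t_0+\Delta_0$), the feedback input $i_2$ (the output of channel $c_f$), and the OR output $o_{or}=i_1\vee i_2$, which the zero-delay OR gate computes instantaneously. Assuming the loop is quiescent at $0$ before $t_0$, the rising edge of $i_1$ makes $o_{or}$ rise at $t_0$, and $c_f$ forwards this edge to produce a rising edge of $i_2$ at some time $t_r$. I would first argue that $i_2$ is still $0$ when $i_1$ falls, so that $o_{or}$ indeed falls at $t_0+\Delta_0$ and reproduces the input pulse; the falling edge of $o_{or}$ then produces, via $c_f$, a falling edge of $i_2$ at some time $t_f$. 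The whole claim reduces to showing that the induced pulse on $i_2$ collapses, i.e.\ $t_f\le t_r$ for every admissible adversary, since then $i_2$ never actually rises, $o_{or}=i_1$, and no further transitions occur.

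The second step is the worst-case analysis over the adversarial variations. Writing $t_r=t_0+\dup(T_0)+\eta_0^\uparrow$ with $T_0=+\infty$ at the initial edge (so $\dup(T_0)=\dupinfty$ and $\eta_0^\uparrow\in[-\etminf,\etpinf]$), and $t_f=t_0+\Delta_0+\ddo(\max\{T_1,-\dupinfty\})+\eta_1^\downarrow$ with $T_1=\Delta_0-\dup(T_0)-\eta_0^\uparrow$, I note that $t_f-t_r$ is strictly decreasing in $t_r$ (its $t_r$-derivative is $-\ddo'(\cdot)-1<0$), so the adversary maximizes the pulse width by choosing $\eta_0^\uparrow=-\etminf$ and $\eta_1^\downarrow=\etp(T_1)$. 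I would then verify two bookkeeping facts under the hypothesis $\Delta_0\le\dupinfty-\dmin-\etpinf-\etminf$: first, $t_r\ge t_0+\dupinfty-\etminf>t_0+\Delta_0$, which justifies that $o_{or}$ really falls at $t_0+\Delta_0$; and second, $T_1\le-\dmin-\etpinf$, which is negative enough to lie in the flat regime where $\etp(T_1)=\etpinf$ and where the clamp $\max\{T_1,-\dupinfty\}$ is inactive. With these, the surviving condition $t_f\le t_r$ collapses to the scalar inequality $\ddo(x)+x\le-\etpinf$ for $x:=T_1\le-\dmin-\etpinf$.

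The crux is this last inequality, which I would discharge using the involution property rather than a direct derivative computation. From $-\dup(-\ddo(T))=T$ one gets $\ddo(x)=-\dup^{-1}(-x)$, so with $u:=-x\ge\dmin+\etpinf$ the inequality becomes $\phi(u):=u+\dup^{-1}(u)\ge\etpinf$. Since $\dup^{-1}$ is increasing, $\phi$ is strictly increasing with $\phi'(u)=1+1/\dup'(\dup^{-1}(u))\ge1$, and the fixed-point characterization \eqref{eq:dmin} gives $\dup^{-1}(\dmin)=-\dmin$, hence $\phi(\dmin)=0$. Integrating $\phi'\ge1$ from $\dmin$ to $u$ yields $\phi(u)\ge u-\dmin\ge\etpinf$, exactly as required. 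I expect the main obstacle to be the careful bookkeeping of the second step — pinning down the correct worst-case $\eta$ values, confirming that the associated $T_1$ falls in the $\etpinf$-regime, and ruling out the $-\dupinfty$ clamp — rather than the final inequality, which the involution substitution renders routine.
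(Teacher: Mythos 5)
Your proposal is correct in substance and, structurally, it is the same argument as the paper's: reduce the lemma to showing that the two transitions induced on $i_2$ cancel, bound the earliest rising arrival by $t_0+\dupinfty-\etminf$ and the latest falling arrival by $t_0+\Delta_0+\ddo(T_1)+\etp(T_1)$, and then discharge a scalar inequality using the hypothesis on $\Delta_0$ together with \cref{eq:dmin}. The only genuine difference is the last step: the paper does it in one line by monotonicity, namely $x:=T_1\leq-\dmin-\etpinf\leq-\dmin$ and $\ddo(.)$ increasing with $\ddo(-\dmin)=\dmin$ give $\ddo(x)+x\leq\dmin+(-\dmin-\etpinf)=-\etpinf$ directly. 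Your detour via the involution $\ddo(x)=-\dup^{-1}(-x)$, the fixed point $\phi(\dmin)=0$, and integration of $\phi'\geq 1$ is valid, but heavier than needed; the involution property plays no essential role in this lemma (even inside your own argument, monotonicity of $\dup^{-1}$ would replace the integration).

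One step of your write-up is not justified as stated, although it is self-repairing. The claim that the adversary's optimum is $\eta_0^\uparrow=-\etminf$ rests on the derivative computation $-\ddo'(\cdot)-1<0$, which treats $\eta_1^\downarrow$ as a free constant; but in the extended model the admissible range $[-\etm(T_1),\etp(T_1)]$ itself depends on $T_1$ (hence on $\eta_0^\uparrow$), and $\etp(\cdot)$ is neither monotone nor continuous (it has slope $-\rhp$ on $[-\oDelta,-\Delta]$ and jumps at both endpoints), so maximizing first over $\eta_1^\downarrow$ and then over $\eta_0^\uparrow$ does not reduce to that single extreme choice. The repair is already contained in your third step: you prove $\ddo(x)+x\leq-\etpinf$ for \emph{every} $x\leq-\dmin-\etpinf$, every admissible $\eta_0^\uparrow$ yields such a $T_1$, and $\etp(T_1)\leq\etpinf$ there; hence $t_f-t_r=T_1+\ddo(T_1)+\eta_1^\downarrow\leq 0$ for all adversary choices, with no worst case singled out --- you should phrase the argument this way. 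Two bookkeeping caveats: the flat-regime \emph{equality} $\etp(T_1)=\etpinf$ would require $\oDelta<\dmin+\etpinf$, which \cref{def:eta} does not guarantee (only the upper bound $\etp(T_1)\leq\etpinf$ is needed; the paper's own proof makes the same implicit assumption), and the clamp $\max\{T_1,-\dupinfty\}$ can in fact be active (e.g.\ when $\Delta_0<\etpinf$ and $\eta_0^\uparrow$ is near $\etpinf$), but in that case the involution property gives $\ddo(-\dupinfty)=-\infty$, so the falling edge trivially cancels the rising one and the case is harmless rather than excluded.
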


\ifthenelse{\boolean{conference}}
{
     \begin{proof}
     Lacking space did not allow us to include all the proofs
     in this paper. They are provided in an extended version 
     \cite{OS23:arxiv}.
     \end{proof}
}
{	
	\begin{proof}
		The earliest time that the rising transition from $i_1$ can be 
		propagated 
		to $i_2$ is now $t_1' \geq \dupinfty - \etminfty$. 
		Therefore, for the falling transition, we get $T = \Delta_0 - t_1'$,  
		the falling output transition on $i_2$ hence cannot occur later than 
		$t_2' 
		\leq \Delta_0 + \ddo(T) + \etp(T)$. 
		The two transitions on $i_2$ cancel each other out iff $t_2' \leq 
		t_1'$, 
		i.e., iff
		\begin{equation}\label{eq:shortpulse}
			X = \Delta_0 + \ddo(T) + \etp(T) - \dupinfty + \etminfty \leq 0.
		\end{equation}
		
		Plugging in the upper bound on $\Delta_0$ from our lemma into
		$\ddo(T)=\ddo(\Delta_0-t_1') \leq \ddo(\Delta_0-\dupinfty + \etminfty)$
		while recalling monotonicity of $\ddo(.)$, $\etpinfty\geq 0$ reveals
		$\ddo(T) \leq -\dmin + \ddo(-\dmin) = 0$ according to \cref{eq:dmin}. 
		Using these upper bounds
		in \cref{eq:shortpulse} leads to $X\leq 0$, hence the two transitions
		indeed cancel each other.
	\end{proof}
}

Note that the upper bound on $\Delta_0>0$ given in \cref{lemma:shortpulses}
already implies the constraint 
\begin{equation}
 	\etpinf + \etminf < \dupinfty - \dmin.\tag{C3}\label{eq:constraintc3}
\end{equation}

The most delicate part of the proof are again medium-sized 
input pulses, i.e., where $\dupinfty - \dmin - \etpinf - \etminf < \Delta_0 < 
\dupinfty + \etpinf$. Inspired by \cite{FMNNS18:DATE}, we also consider
a self-repeating critical pulse train, by 
letting the adversary choose all rising transitions maximally late 
($\etp(T)$) and all falling transitions maximally early ($-\etm(T)$).
Formally, for a pulse of length $\Delta_{n-1}$, the length of the next pulse $\Delta_{n}$
can be calculated as:
\begin{align}
	\Delta_n = &g(\Delta_{n-1}) = \nonumber\\
	&\Delta_{n-1} - \dup(-\Delta_{n-1}) 
	- \etp(-\Delta_{n-1})\nonumber\\
	&+ \ddo(-\dup(-\Delta_{n-1}) - \etp(-\Delta_{n-1}) + 	
	\Delta_{n-1})\nonumber\\ 
	&- \etm(-\dup(-\Delta_{n-1}) - \etp(-\Delta_{n-1}) + \Delta_{n-1}), 
	\label{eq:delta-new}
\end{align}
where $\etp(T)$ and $\etm(T)$ are given by \cref{def:eta}.
The period $P_n=\Delta_n+\cDelta_{n+1}$ resp.\ $\cP_{n}=\cDelta_n+\Delta_n$
involving the pulse $\Delta_n$, measured
from the rising transition of $\Delta_n$ to the rising transition
of $\Delta_{n+1}$ resp.\ the falling transition of $\Delta_{n-1}$ to
the falling transition of $\Delta_{n}$, is
\begin{align}
  P_n &= \dup(-\Delta_n)+\etp(-\Delta_{n}) \mbox{ resp.\ }\nonumber\\
  \cP_n &= \ddo(-\cDelta_n)-\etm(-\cDelta_{n})\label{def:periods-new}.
\end{align}
We note that the length of the next pulse and the periods
of the critical pulse train established in \cite{FMNNS18:DATE} are
\begin{align}
	\Delta_n &= f(\Delta_{n-1}) = \Delta_{n-1} 
	- \dup(-\Delta_{n-1}) - \etpmin - \etmmin\nonumber\\ 
	&\qquad\qquad + \ddo\Big(-\dup(-\Delta_{n-1}) - \etpmin + 
	\Delta_{n-1}\Big)\label{eq:delta-old},\\
	P_n &=   \dup(-\Delta_n)+\etpmin\nonumber \mbox{ resp.\ }
\cP_n = \ddo(-\cDelta_n)-\etmmin\nonumber,
\end{align}
which were shown in \cite[Lem.~5]{FMNNS:arxiv} to have a fixed point
$\Delta:=\Delta_{n-1}=\Delta_n < \dmin$ and a period $P=\cP=\tau$ for
some $\etpmin+\dmin < \tau < \min(-\etmmin + \delta^\downarrow_\infty, \etpmin + \delta^\uparrow_\infty)$, provided the constraint
\cref{eq:constraintc1} holds.

The following \cref{lemma:fixedpoint}
shows that $\Delta$
is also a fixed point for \cref{eq:delta-new}:

\begin{lemma}\label{lemma:fixedpoint}
  The fixed point $f(\Delta) = \Delta$ for the old bounds is
  also a fixed point for the new bounds, i.e., $g(\Delta) = \Delta$.
  The critical pulse train defined by \cref{eq:delta-new} hence
  has the same period $P=P'=\tau$ and duty cycle $\gamma=\Delta/\tau < 1$
  as the one defined by \cref{eq:delta-old}.
\end{lemma}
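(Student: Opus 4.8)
The plan is to verify the identity $g(\Delta)=\Delta$ by a direct substitution into \cref{eq:delta-new}, exploiting the structural fact that the new bounds $\etp(\cdot)$ and $\etm(\cdot)$ were deliberately constructed to coincide with the old constants $\etpmin,\etmmin$ precisely at the two arguments that arise when the critical pulse train sits at its fixed point. Concretely, I would first set $\Delta_{n-1}=\Delta$ in \cref{eq:delta-new} and evaluate the outermost variation term $\etp(-\Delta)$. Since $-\Delta$ is the right endpoint of the middle (linear) branch of \cref{eq:etapdef}, \cref{def:eta} gives $\etp(-\Delta)=\rhp\cdot(\Delta-\Delta)+\etpmin=\etpmin$; here one only needs $\Delta\le\oDelta$ so that $-\Delta$ indeed lies in $[-\oDelta,-\Delta]$ and we are on the linear branch rather than on the $\etpinf$ plateau.

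The second step is to simplify the common inner argument of $\ddo(\cdot)$ and $\etm(\cdot)$. After inserting $\etp(-\Delta)=\etpmin$, this argument becomes $-\dup(-\Delta)-\etpmin+\Delta$, which by the definition $\Delta'=\dup(-\Delta)+\etpmin-\Delta$ is exactly $-\Delta'$. Evaluating the remaining variation term at this point, $-\Delta'$ is the vertex of \cref{eq:etamdef}, so \cref{def:eta} pins $\etm(-\Delta')=\etmmin$ (using $\Delta'>0$ so that $-\Delta'<0$). Substituting these two values back, the expression for $g(\Delta)$ collapses term-by-term to the old map $f(\Delta)=\Delta-\dup(-\Delta)-\etpmin-\etmmin+\ddo(-\Delta')$ of \cref{eq:delta-old}. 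Since $\Delta$ is a fixed point of $f$ by \cite[Lem.~5]{FMNNS:arxiv}, we conclude $g(\Delta)=f(\Delta)=\Delta$.

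For the period and duty-cycle claims, I would plug $\Delta_n=\Delta$ and the corresponding down-time $\cDelta_n=\Delta'$ (which equals $P-\Delta$) into the new period formulas \cref{def:periods-new} and apply the very same identities $\etp(-\Delta)=\etpmin$ and $\etm(-\Delta')=\etmmin$. This reduces $P$ and $\cP$ to $\dup(-\Delta)+\etpmin$ and $\ddo(-\Delta')-\etmmin$, i.e.\ exactly the old periods. Hence $P=\cP=\tau$ and the duty cycle $\gamma=\Delta/\tau<1$ are inherited verbatim from the analysis of \cref{eq:delta-old} in \cite[Lem.~5]{FMNNS:arxiv}.

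The crux is not a hard estimate but the structural observation that the fixed-point arguments fed into the variation functions are exactly the two vertices $-\Delta$ and $-\Delta'$ at which the new bounds were pinned to their old values --- which is precisely the design rationale behind \cref{def:eta}. The only genuine verifications are the range checks ($\Delta\le\oDelta$ and $\Delta'>0$) guaranteeing that we land on the linear branches rather than on the $\etpinf/\etminf$ plateaus, together with the bookkeeping that the down-time of the critical train coincides with the $\Delta'$ fixed in \cref{def:eta}.
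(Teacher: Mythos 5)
Your proposal is correct and takes essentially the same route as the paper's own proof: a direct substitution of $\Delta$ into \cref{eq:delta-new}, using that $\etp(-\Delta)=\etpmin$ and that the inner argument collapses to $-\Delta'$ where $\etm(-\Delta')=\etmmin$, followed by term-wise comparison with \cref{eq:delta-old} and the same substitution into \cref{def:periods-new} for the period and duty-cycle claims. Your write-up is in fact somewhat more careful than the paper's, since you make explicit the range checks ($\Delta\le\oDelta$, $\Delta'>0$) ensuring the arguments land on the linear branches of \cref{def:eta}, which the paper leaves implicit.
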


\ifthenelse{\boolean{conference}}
{
	\begin{proof}
		Plugging in $\Delta$ in \cref{eq:delta-new} and applying the definition 
		for 
		$\etp(.)$ and $\etm(.)$ from \cref{def:eta} yields
		\begin{align}
			g(\Delta) = &\ddo\Big(-\dup(-\Delta) - \etpmin + 
			\Delta\Big)\nonumber\\ 
			&- \etmmin - \dup(-\Delta) - \etpmin + \Delta.\label{eq:fixpointg}
		\end{align}
		
		A term-wise comparison of \cref{eq:delta-old} with 
		$\Delta_{n-1}=\Delta_n=\Delta$ and \cref{eq:fixpointg} 
		reveals that $g(\Delta) = \Delta$. Plugging in $\Delta$ in
		\cref{def:periods-new} and noting $\etp(-\Delta)=\etpmin$ and
		$\etm(-\Delta)=\etmmin$ also confirms the same period and hence
		duty cycle.
	\end{proof}
}
{
	\begin{proof}
		Plugging in $\Delta$ in \cref{eq:delta-new} and applying the definition 
		for 
		$\etp(.)$ and $\etm(.)$ from \cref{def:eta} yields
		\begin{align}
			g(\Delta) = &\ddo\Big(-\dup(-\Delta) - \etpmin + 
			\Delta\Big)\nonumber\\ 
			&- \etmmin - \dup(-\Delta) - \etpmin + \Delta.\label{eq:fixpointg}
		\end{align}
		
		A term-wise comparison of \cref{eq:delta-old} with 
		$\Delta_{n-1}=\Delta_n=\Delta$ and \cref{eq:fixpointg} 
		reveals that $g(\Delta) = \Delta$. Plugging in $\Delta$ in
		\cref{def:periods-new} and noting $\etp(-\Delta)=\etpmin$ and
		$\etm(-\Delta)=\etmmin$ also confirms the same period and hence
		duty cycle.
	\end{proof}
}

In order to guarantee diverging from the critical pulse
train for excessive pulses, we need a bound on the derivative of $g(.)$:

\begin{lemma}\label{lemma:newboundsderivative} For $g(x)$ defined in
  \cref{eq:delta-new}, we have $g'(x) > 1$ for $\oDelta 
	\geq x \geq \Delta$, provided
	\begin{equation}
		(1 - \rhm)(\dup'(-\Delta) - \rhp + 1) > 		
		1.\tag{C4}\label{eq:constraintc4}
	\end{equation}
\end{lemma}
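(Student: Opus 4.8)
The plan is to first rewrite \cref{eq:delta-new} in a form that exposes its composite structure, and then differentiate. Writing $W(x) = x - \dup(-x) - \etp(-x)$, one checks that the first three summands of \cref{eq:delta-new} add up to exactly $W(x)$, and that $W(x)$ is precisely the argument fed into both $\ddo(\cdot)$ and $\etm(\cdot)$. Hence $g(x) = W(x) + \ddo(W(x)) - \etm(W(x))$, and the chain rule gives $g'(x) = W'(x)\bigl(1 + \ddo'(W(x)) - \etm'(W(x))\bigr)$, where $\etm'$ denotes the derivative of $\etm$ with respect to its argument. Everything then reduces to controlling the two factors on the range $\Delta \le x \le \oDelta$.

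Second, I would pin down which branches of $\etp$ and $\etm$ are active. For $x \in [\Delta,\oDelta]$ we have $-x \in [-\oDelta,-\Delta]$, so $\etp(-x)$ lies in its linear branch and $W'(x) = 1 + \dup'(-x) - \rhp$; in particular $W'(\Delta) = \dup'(-\Delta) - \rhp + 1$ is exactly the factor appearing in \cref{eq:constraintc4}. A short computation using $\Delta' = \dup(-\Delta) + \etpmin - \Delta$ shows $W(\Delta) = -\Delta'$, i.e.\ at the fixed point the argument of $\etm$ sits at the lower end of $\etm$'s linear branch. Since \cref{eq:constraintc4} forces $1-\rhm>0$ and $W'(\Delta)>0$, concavity of $\dup$ (its derivative is non-increasing, evaluated at the decreasing argument $-x$) yields $W'(x) \ge W'(\Delta) > 0$ throughout; thus $W$ is increasing and $W(x) \ge -\Delta'$, so the argument of $\etm$ never leaves its linear branch except possibly to enter the constant branch $W(x)\ge 0$ near $\oDelta$, where $\etm'=0$.

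Third, I would bound the bracket. On the linear branch the variation range $\etm$ is minimal at the critical down-time $-\Delta'$ and grows as its argument increases, so its slope enters $g'$ with a negative sign, giving the bracket $1 - \rhm + \ddo'(W(x))$ (and $1 + \ddo'(W(x)) > 1-\rhm$ in the constant branch, which only helps). Using monotonicity of $\ddo$, i.e.\ $\ddo'(W(x)) > 0$, the bracket is strictly larger than $1 - \rhm > 0$. Combining the two lower bounds for the (positive) factors of $g'$ gives $g'(x) > (1-\rhm)\,W'(\Delta) = (1-\rhm)\bigl(\dup'(-\Delta) - \rhp + 1\bigr)$, which is $>1$ by \cref{eq:constraintc4}. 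The only point of non-smoothness, namely where $W(x)=0$, is handled by reading $g'$ as a one-sided derivative on each piece; the bound holds on both sides, which is all that is needed to conclude that $g$ overshoots the identity on $[\Delta,\oDelta]$ (recall $g(\Delta)=\Delta$).

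The main obstacle, I expect, is the careful branch bookkeeping together with the sign of the $\etm$-contribution: it is essential that $\etm'$ enters the bracket as $-\rhm$ rather than $+\rhm$, since this is exactly what turns the requirement $g'>1$ into the factor $(1-\rhm)$ of \cref{eq:constraintc4}; the $+\rhm$ alternative would make the claim follow from a strictly weaker hypothesis and would not explain C4. A secondary subtlety is the mild interdependence between ``$W$ increasing'' and ``$W'(\Delta)>0$'': one must extract $W'(\Delta)>0$ directly from \cref{eq:constraintc4} (via $1-\rhm>0$) before invoking concavity, so that the branch identification is not circular.
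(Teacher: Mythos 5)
Your proposal is correct and takes essentially the same route as the paper's proof: the same composite decomposition of $g$ via $W(x)=x-\dup(-x)-\etp(-x)$, the chain rule $g'(x)=W'(x)\bigl(1+\ddo'(W(x))-\etm{}'(W(x))\bigr)$, and the per-factor lower bounds $W'(x)\geq \dup'(-\Delta)-\rhp+1$ (concavity of $\dup$) and $1+\ddo'(W(x))-\etm{}'(W(x)) > 1-\rhm$ (monotonicity of $\ddo$ plus $\etm{}'\leq\rhm$), multiplied to yield \cref{eq:constraintc4}. If anything, you are more careful than the paper, which asserts the same chain of inequalities without the branch bookkeeping ($W(\Delta)=-\Delta'$, $W$ increasing), without the positivity extraction needed to legitimately multiply the two lower bounds, without the one-sided-derivative caveat where $W(x)=0$, and without noting that the displayed linear branch of $\etm$ in \cref{def:eta} carries a sign typo --- your reading, with $\etm$ minimal at $-\Delta'$ and increasing in its argument so that its slope enters the bracket as $-\rhm$, is the one consistent with $\etm(-\Delta')=\etmmin$ as used in \cref{lemma:fixedpoint} and is the only reading that explains the factor $(1-\rhm)$ in \cref{eq:constraintc4}.
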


\ifthenelse{\boolean{conference}}
{
	\begin{proof}
		Calculating the derivative of $g(x)$ yields:
		\begin{align}
			g'(x) = 
			& \bigg[1 + \ddo'\Big(-\dup(-x) - \etp(-x) + x\Big)\nonumber\\
			&- \etmd\Big(-\dup(-x) - \etp(-x) + 
			x\Big)\bigg]\nonumber\\
			&\cdot \Big(\dup'(-x) + \etpd(-x) + 1\Big)\nonumber\\
			\geq& \bigg[1 + \ddo'\Big(-\dup(-x) - \etp(-x) + 
			x\Big)- \rho^-\bigg]\nonumber\\
			&\cdot \Big(\dup'(-x) - \rho^+ + 1\Big)\label{eq:gderivativebound}
			\\>&(1 - \rho^-) (\dup'(-\Delta) - \rho^+ + 1) > 1
		\end{align}
		To justify \cref{eq:gderivativebound}, we note that $\dup(.)$ and 
		$\ddo(.)$ are strictly increasing and concave, so $\dup'(.) > 0$ and 
		$\ddo'(.) > 0$. Moreover, $\dup'(-x)$ is decreasing in $-x$, hence can 
		be lower
		bounded by $\dup'(-\Delta)$ since $x \geq \Delta$, and
		$\rhp \geq 0$ and $\rhm \geq 0$ according to \cref{def:eta}.	
		Choosing $\rhp$ and $\rhm$ according to the sufficient 
		condition \cref{eq:constraintc4} finally ensures $g'(x) > 1$.
	\end{proof}
}
{
	\begin{proof}
		Calculating the derivative of $g(x)$ yields:
		\begin{align}
			g'(x) = 
			& \bigg[1 + \ddo'\Big(-\dup(-x) - \etp(-x) + x\Big)\nonumber\\
			&- \etmd\Big(-\dup(-x) - \etp(-x) + 
			x\Big)\bigg]\nonumber\\
			&\cdot \Big(\dup'(-x) + \etpd(-x) + 1\Big)\nonumber\\
			\geq& \bigg[1 + \ddo'\Big(-\dup(-x) - \etp(-x) + 
			x\Big)- \rho^-\bigg]\nonumber\\
			&\cdot \Big(\dup'(-x) - \rho^+ + 1\Big)\label{eq:gderivativebound}
			\\>&(1 - \rho^-) (\dup'(-\Delta) - \rho^+ + 1) > 1
		\end{align}
		To justify \cref{eq:gderivativebound}, we note that $\dup(.)$ and 
		$\ddo(.)$ are strictly increasing and concave, so $\dup'(.) > 0$ and 
		$\ddo'(.) > 0$. Moreover, $\dup'(-x)$ is decreasing in $-x$, hence can 
		be lower
		bounded by $\dup'(-\Delta)$ since $x \geq \Delta$, and
		$\rhp \geq 0$ and $\rhm \geq 0$ according to \cref{def:eta}.	
		Choosing $\rhp$ and $\rhm$ according to the sufficient 
		condition \cref{eq:constraintc4} finally ensures $g'(x) > 1$.
	\end{proof}
}

As the next step in our proof, we need to show that pulses that are larger than 
$\Delta$ cannot come back, i.e., if $\Delta_{n-1}>\Delta$ ever holds, then
$\Delta_{n-1+k} < \Delta_{n+k}$ for every $k\geq 0$:

\begin{lemma}\label{lemma:monincreasing}
	It holds that $g(\Delta_1) - \Delta > \Delta_1 - \Delta$ if $\Delta_1 > 
	\Delta$, provided that $\rhp$ and $\rhm$ are chosen according to 
	\cref{eq:constraintc4}. 
\end{lemma}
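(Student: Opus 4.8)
The plan is to reduce the claim to a single application of the Mean Value Theorem, leaning on the two facts just established: that $\Delta$ is a fixed point of $g$ (\cref{lemma:fixedpoint}) and that $g'(x)>1$ throughout $[\Delta,\oDelta]$ (\cref{lemma:newboundsderivative}). First I would observe that the asserted inequality $g(\Delta_1)-\Delta > \Delta_1-\Delta$ is nothing but $g(\Delta_1) > \Delta_1$; the subtraction of $\Delta$ on both sides merely emphasizes that the distance to the fixed point strictly increases under one application of $g$, which is the form needed in the subsequent divergence argument.

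For the main step I would restrict attention to $\Delta < \Delta_1 \leq \oDelta$, the range in which $\etp$ follows its linear ramp and in which $g$ is differentiable (as already used in \cref{lemma:newboundsderivative}, $g$ being composed of the differentiable $\dup,\ddo$ and the piecewise-linear $\etp,\etm$). The Mean Value Theorem then yields some $\xi \in (\Delta,\Delta_1) \subseteq (\Delta,\oDelta]$ with $g(\Delta_1)-g(\Delta) = g'(\xi)\,(\Delta_1-\Delta)$. Substituting $g(\Delta)=\Delta$ from \cref{lemma:fixedpoint}, invoking $g'(\xi)>1$ from \cref{lemma:newboundsderivative} (applicable precisely because $\xi$ lies in $[\Delta,\oDelta]$ and $\rho^+,\rho^-$ are chosen according to \cref{eq:constraintc4}), and using $\Delta_1-\Delta>0$, I obtain $g(\Delta_1)-\Delta = g'(\xi)(\Delta_1-\Delta) > \Delta_1-\Delta$ at once.

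The point requiring care---and the only real obstacle---is the behaviour at and beyond $\oDelta$, since the derivative bound of \cref{lemma:newboundsderivative} is valid only on $[\Delta,\oDelta]$: for $\Delta_1 > \oDelta$ the argument $-\Delta_1 < -\oDelta$ pushes $\etp(-\Delta_1)$ up to $\etpinf$, so the slope governing $g$ changes and the simple MVT estimate no longer applies. I would handle this by arguing that establishing the strict growth on $(\Delta,\oDelta]$ already suffices for the overall proof. The defining inequality for $\oDelta$, namely $\oDelta \geq \dup(-\oDelta) + \rho^+(\oDelta-\Delta) + \etpmin$, states exactly that the period $\dup(-\oDelta)+\etp(-\oDelta)$ does not exceed the up-time $\oDelta$; hence, once an up-time reaches $\oDelta$, the intervening down-time becomes non-positive, the pulses merge, and the OR-feedback loop locks at constant $1$ (exactly as in the long-pulse case of \cref{lemma:longpulses}), so no further iteration inside the ramp region is needed. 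I expect the interior estimate to be a one-line consequence of the MVT, with all the subtlety concentrated in this clean bookkeeping of why pulses that exceed $\Delta$ ``cannot come back.''
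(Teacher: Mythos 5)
Your proof is correct and takes essentially the same route as the paper's: the Mean Value Theorem on $(\Delta,\Delta_1)$, the fixed point $g(\Delta)=\Delta$ from \cref{lemma:fixedpoint}, and the bound $g'(\xi)>1$ from \cref{lemma:newboundsderivative}. Your explicit treatment of the case $\Delta_1 > \oDelta$ --- where the derivative bound no longer applies and one instead appeals to the locking argument of \cref{lemma:feedbacklock} --- patches a point that the paper's own proof leaves implicit, since it invokes \cref{lemma:newboundsderivative} without restricting $\xi$ to $[\Delta,\oDelta]$.
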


\ifthenelse{\boolean{conference}}
{
}
{
	\begin{proof}	
		From the mean value theorem of calculus, we know that 
		\begin{align}
			\exists \xi \in (\Delta, \Delta_1)\ s.t.\ g'(\xi) = 
			\frac{g(\Delta_1) 
				- g(\Delta)}{\Delta_1-\Delta}.
		\end{align}
		Applying the fact that $g(\Delta) = \Delta$ and rearranging yields 
		$g(\Delta_1) - \Delta = g'(\xi)(\Delta_1 - \Delta)$.
		From \cref{lemma:newboundsderivative}, we know that 
		$g'(\xi) > (1 - \rhm)(\dup'(-\Delta) - \rhp + 1)$, and hence 
		$g(\Delta_1) - 
		\Delta > \Delta_1 - \Delta$, if $\rhp$ and $\rhm$ are chosen 
		appropriately.
	\end{proof}
}

It still remains to be shown that the feedback loop locks at 1 when the pulse width 
reaches $\oDelta$ given in \cref{def:eta}:

\begin{lemma}\label{lemma:feedbacklock}
  The feedback loop locks when the pulse length satisfies $\Delta_n \geq
  \oDelta \geq 
	\dup(-\oDelta) + \rhp \cdot (\oDelta - \Delta) + \etpmin$.
\end{lemma}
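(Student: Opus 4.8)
The plan is to reduce the statement ``the feedback loop locks at $1$'' to a geometric condition on the critical pulse train, and then to check that condition at the threshold $\Delta_n=\oDelta$. Referring to \cref{fig:spfcircuiteta}, once the external input $i_1$ has returned to $0$, the OR output $o_{or}$ equals its own delayed feedback $i_2=c_f(o_{or})$, so a self-sustaining pulse train with up-time $\Delta_n$ must obey the self-consistency that fixes its rising-to-rising period to $P_n=\dup(-\Delta_n)+\etp(-\Delta_n)$ as in \cref{def:periods-new}. The low phase separating pulse $n$ from pulse $n+1$ then has length $\cDelta_{n+1}=P_n-\Delta_n$. My first step is to argue that the loop locks at $1$ exactly when this low phase becomes non-positive, i.e.\ when $P_n\le\Delta_n$: in that case the rising transition that $c_f$ feeds back from $r_n$ reaches the OR input no later than the falling edge of the current pulse, so $o_{or}$ (hence $i_2$) never returns to $0$; with $i_1=0$ the OR output is then pinned at $1$, and since a constant-$1$ input to $c_f$ reproduces a constant-$1$ feedback, the lock is self-sustaining.

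Given this characterization, the second step is a one-line computation at $\Delta_n=\oDelta$. Since $-\oDelta$ is the left endpoint of the interval $[-\oDelta,-\Delta]$, the linear branch of \cref{eq:etapdef} gives $\etp(-\oDelta)=\rhp\,(\oDelta-\Delta)+\etpmin$, so $P_n=\dup(-\oDelta)+\rhp\,(\oDelta-\Delta)+\etpmin$. The defining inequality for $\oDelta$ from \cref{def:eta}, namely $\oDelta\ge\dup(-\oDelta)+\rhp\,(\oDelta-\Delta)+\etpmin$, is precisely $P_n\le\oDelta=\Delta_n$, so the low phase vanishes and the loop locks.

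To cover the full range $\Delta_n\ge\oDelta$ claimed in the lemma -- the iteration of $g(.)$ may jump past $\oDelta$ in a single step -- I would extend the estimate to $\Delta_n>\oDelta$. There $-\Delta_n<-\oDelta$, hence $\etp(-\Delta_n)=\etpinf$, and monotonicity of $\dup(.)$ yields $P_n=\dup(-\Delta_n)+\etpinf\le\dup(-\oDelta)+\etpinf$; because the bound in \cref{fig:boundcomp} is designed so that the linear branch meets $\etpinf$ at $-\oDelta$ (equivalently $\etpinf\le\rhp(\oDelta-\Delta)+\etpmin$), this is again $\le\oDelta\le\Delta_n$, so the low phase stays non-positive there as well. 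Finally, I would invoke \cref{lemma:monincreasing} and \cref{lemma:newboundsderivative}: once some pulse exceeds $\Delta$, the up-times strictly increase because $g'(x)>1$ on $[\Delta,\oDelta]$, so they necessarily reach the threshold $\oDelta$, at which point the preceding argument forces the lock.

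I expect the first step to be the main obstacle: rigorously justifying that $P_n\le\Delta_n$ forces the OR feedback loop to lock and cannot subsequently unlock. This needs careful bookkeeping of the rising and falling transitions that $c_f$ produces at the OR input -- in particular the single-history channel's cancellation behaviour when a scheduled rising transition would precede the pending falling transition -- rather than the routine inequality manipulation that closes out the later steps.
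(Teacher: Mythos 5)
Your first two steps coincide with the paper's own proof of \cref{lemma:feedbacklock}: the paper likewise reduces locking to the condition that the (maximally delayed) rising transition of pulse $\Delta_{n+1}$ is scheduled at or before the falling transition of pulse $\Delta_n$, and then verifies $\dup(-\oDelta)+\etp(-\oDelta)\leq\oDelta$ by evaluating the linear branch of \cref{eq:etapdef} at $T=-\oDelta$ and invoking the defining inequality on $\oDelta$ from \cref{def:eta}. Your explicit cancellation bookkeeping for why $P_n\leq\Delta_n$ pins $o_{or}$ at $1$ forever is more detailed than the paper's one-sentence assertion of the same fact, but it is the same argument.

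The deviation is your third step, and it contains a genuine gap. To cover $\Delta_n>\oDelta$ you use the inequality $\etpinf\leq\rhp(\oDelta-\Delta)+\etpmin$, asserting that the bounds are ``designed so that the linear branch meets $\etpinf$ at $-\oDelta$.'' \cref{def:eta} imposes no such requirement: it only asks $\etpinf\geq\etpmin$, $\rhp\geq 0$, and $\oDelta\geq\dup(-\oDelta)+\rhp(\oDelta-\Delta)+\etpmin$, and $\etp(\cdot)$ is permitted to jump discontinuously up to $\etpinf$ to the left of $-\oDelta$. Indeed, the paper's own example has $\etpmin\approx\SI{63}{\fs}$ and $\oDelta-\Delta\approx\SI{39}{\fs}$ but $\etpinf\approx\SI{1.47}{\ns}$, so your inequality would force $\rhp$ into the tens of thousands; that is irreconcilable with \cref{eq:constraintc4} as it is used in \cref{lemma:newboundsderivative}, whose inequality chain tacitly requires both factors to be nonnegative, i.e., $\rhm\leq 1$ and $\rhp\leq\dup'(-\Delta)+1$. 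Worse, adding your inequality as a constraint would cap $\etpinf$ at roughly $(\dup'(-\Delta)+1)(\oDelta-\Delta)+\etpmin$, defeating the paper's entire purpose of allowing $\etpinf\gg\etpmin$. So step three, as justified, does not go through.

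To be fair, the difficulty you ran into is real, and the paper does not resolve it either: the lemma is stated for all $\Delta_n\geq\oDelta$, the iteration of $g(\cdot)$ (or a non-worst-case adversary) can jump strictly past $\oDelta$ exactly as you observe, yet the paper's proof checks the locking inequality only at $\Delta_n=\oDelta$. For $\Delta_n>\oDelta$ the forced-locking condition reads $\dup(-\Delta_n)+\etpinf\leq\Delta_n$, which for $\Delta_n$ slightly above $\oDelta$ amounts to $\etpinf\leq\oDelta-\dup(-\oDelta)$ --- a condition that constraints (C2)--(C4) and the choice of $\oDelta$ do not imply, and that fails for the paper's example parameters. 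Closing this case needs either an explicit extra constraint of that form in \cref{def:eta}, or a dynamical argument that the feedback loop can never produce a \emph{completed} pulse with up-time in the offending range; neither the paper nor your proposal supplies one. In short: your steps one and two reproduce the paper's proof, and your step three correctly identifies, but does not correctly repair, a hole that the paper itself glosses over.
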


\ifthenelse{\boolean{conference}}
{
}
{
	\begin{proof}
		The feedback loops locks at 1 if the rising transition of pulse
		$\Delta_{n+1}$ is scheduled before or at 
		the same time as the falling transition of pulse $\Delta_n$. 
		This cancellation happens when $\dup(-\oDelta) + \etp(-\oDelta) \leq 
		\oDelta$, which is implied by the assumption on $\oDelta$ stated in our 
		lemma
		and the definition of $\etp(T)$ for $T=-\oDelta$ given in 
		\cref{def:eta}.
	\end{proof}
}

\begin{theorem}\label{thm:mediumpulses}
	The new bounds ensure that pulses which are larger than $\Delta$ are 
	monotonically increasing until the pulse length reaches $\oDelta$ and the 
	feedback loop locks at 1.
\end{theorem}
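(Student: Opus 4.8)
The plan is to combine the preceding lemmas into a single statement about the pulse-train recurrence $\Delta_{n+1} = g(\Delta_n)$ of \cref{eq:delta-new}, once some pulse overshoots the fixed point $\Delta$. There are three ingredients to marshal: strict monotone growth of the pulse lengths, the fact that $\oDelta$ is overtaken after finitely many steps, and the locking of the OR-gate feedback loop at $1$.

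First I would establish strict monotonicity on the relevant interval. Suppose $\Delta_n > \Delta$ with $\Delta_n \leq \oDelta$. \cref{lemma:monincreasing} (which rests on \cref{lemma:newboundsderivative} and hence on constraints \cref{eq:constraintc2} and \cref{eq:constraintc4}) gives $g(\Delta_n) - \Delta > \Delta_n - \Delta > 0$, i.e.\ $\Delta_{n+1} > \Delta_n > \Delta$. A straightforward induction then shows that, as long as the pulses stay within $[\Delta, \oDelta]$, the tail $(\Delta_{n+k})_{k \geq 0}$ is strictly increasing and never drops back to $\Delta$; this is exactly the ``a pulse that exceeds $\Delta$ cannot come back'' property.

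Second --- and this is the step I expect to be the main obstacle --- I would show that the sequence actually reaches $\oDelta$ after finitely many steps, rather than merely increasing toward some limit strictly below it. The cleanest route is a fixed-point/continuity argument on $[\Delta, \oDelta]$. From \cref{lemma:newboundsderivative} we have $g'(x) > 1$ there, so $\bigl(g(x) - x\bigr)' = g'(x) - 1 > 0$; together with $g(\Delta) = \Delta$ from \cref{lemma:fixedpoint} this yields $g(x) - x > 0$ for every $x \in (\Delta, \oDelta]$, so $g$ has no fixed point in $(\Delta, \oDelta]$. If the (strictly increasing) sequence never left $[\Delta, \oDelta]$, it would be bounded above by $\oDelta$ and converge to some $L \in (\Delta, \oDelta]$, whence continuity of $g$ would force $g(L) = L$ --- a contradiction. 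Hence there is a smallest index $m$ with $\Delta_m \geq \oDelta$. (Alternatively, since $g(x) - x$ is increasing, the increments $\Delta_{k+1} - \Delta_k$ are nondecreasing and bounded below by the first positive one, so $\oDelta$ is overtaken in at most $\lceil (\oDelta - \Delta_n)/(\Delta_{n+1} - \Delta_n) \rceil$ steps.)

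Finally, once $\Delta_m \geq \oDelta$, I would invoke \cref{lemma:feedbacklock}: the condition $\oDelta \geq \dup(-\oDelta) + \rhp \cdot (\oDelta - \Delta) + \etpmin$ from \cref{def:eta} guarantees that the rising transition of the successor pulse is scheduled no later than the falling transition of $\Delta_m$, so the feedback loop latches at constant $1$ and produces no further output transitions. Stitching the three steps together proves the theorem. Only the finiteness in the second step is genuinely delicate; the remainder is bookkeeping over already-proved lemmas, with constraints \cref{eq:constraintc2} and \cref{eq:constraintc4} being precisely what make the derivative bound $g' > 1$ --- and hence both the monotonicity and the finiteness --- hold.
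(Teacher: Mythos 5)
Your proposal is correct and takes essentially the same approach as the paper, whose entire proof is the single sentence that the theorem ``immediately follows by combining'' \cref{lemma:fixedpoint,lemma:newboundsderivative,lemma:monincreasing,lemma:feedbacklock}. The one step you rightly single out as delicate---that the strictly increasing sequence actually reaches $\oDelta$ in finitely many steps rather than converging to a limit below it---is left implicit in the paper, and your argument (no fixed point in $(\Delta,\oDelta]$ by continuity, or equivalently geometric growth of $\Delta_n-\Delta$ from the uniform bound $g'(x) \geq (1-\rhm)(\dup'(-\Delta)-\rhp+1)>1$) closes that gap correctly.
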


\begin{proof}
	The theorem immediately follows by combining 	
	\cref{lemma:fixedpoint,lemma:newboundsderivative,lemma:monincreasing,lemma:feedbacklock}.
\end{proof}	

\begin{theorem}\label{thm:casedistinction}
	Consider the circuit in \cref{fig:spfcircuiteta} subject to the following 
	constraints:
	\begin{enumerate}
		\item[(C1)] $\etpmin + \etmmin < \ddo(-\etpmin) - \dmin$
		\item[(C2)] $\rhp \geq 0$, $\rhm \geq 0$
		\item[(C3)] $\etpinf + \etminf < \dupinfty - \dmin$
		\item[(C4)] $(1 - \rhm)(\dup'(-\Delta) - \rhp + 1) > 1$.
	\end{enumerate}
	The OR gate fed-back by a strictly causal \etam\ channel has the following 
	output when the input pulse has length $\Delta_0$:
	\begin{itemize}
		\item[(i)] If the input pulse length is $\Delta_0 \geq \dupinfty + 
		\etpinfty$, 
		then the output of the OR gate $o_{or}$ has a unique rising transition 
		at time $0$ and no falling transitions.
		\item[(ii)] If the input pulse length is $\Delta_0 \leq \dupinfty - \dmin - 
		\etpinfty - \etminfty$, then the output pulse of the OR gate $o_{or}$ 
		only contains the input pulse.
		\item[(iii)] If $\dupinfty - \dmin - \etpinfty - \etminfty < \Delta_0 < 
		\dupinfty + \etpinfty$, then the output may resolve to a constant $0$ 
		or $1$, or may be an (infinite) pulse train with a maximum pulse up-time 
		of $\Delta$ and a maximum duty cycle of $\gamma=\Delta/\tau<1$.
	\end{itemize}
\end{theorem}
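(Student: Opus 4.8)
The plan is to prove the theorem as a case distinction on the input pulse length $\Delta_0$, where the three intervals in the statement are exactly the domains already handled by \cref{lemma:longpulses}, \cref{lemma:shortpulses}, and \cref{thm:mediumpulses}. The proof is therefore a synthesis of these preceding results rather than a fresh argument, and I would keep it short, deferring all the analytic work to the lemmas whose hypotheses match the case boundaries by construction.

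For case (i), I would invoke \cref{lemma:longpulses} directly: when $\Delta_0 \geq \dupinfty + \etpinfty$, the fed-back rising transition reaches the OR gate before the input's falling edge can take effect, so the loop locks at constant $1$ and $o_{or}$ exhibits a single rising transition at time $0$. For case (ii), I would invoke \cref{lemma:shortpulses} directly: when $\Delta_0 \leq \dupinfty - \dmin - \etpinfty - \etminfty$, the rising and falling edges fed back to the OR gate cancel, leaving only the transitions of the input pulse at $o_{or}$.

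The substance lies in case (iii), the medium regime $\dupinfty - \dmin - \etpinfty - \etminfty < \Delta_0 < \dupinfty + \etpinfty$. Here I would argue that the pulse-length dynamics of the feedback loop are governed by the map $g(\cdot)$ of \cref{eq:delta-new} under the worst-case adversary (rising edges delayed by $\etp(T)$, falling edges advanced by $\etm(T)$). By \cref{lemma:fixedpoint}, $\Delta$ is a fixed point of $g$, yielding the self-repeating critical train of period $\tau$ and duty cycle $\gamma = \Delta/\tau < 1$; this is the infinite-pulse-train outcome with maximal up-time $\Delta$. The two remaining outcomes follow from a dichotomy on how the up-times evolve. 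If some pulse up-time ever exceeds $\Delta$, \cref{thm:mediumpulses} (built from \cref{lemma:newboundsderivative,lemma:monincreasing,lemma:feedbacklock} under constraint (C4)) shows the up-times increase monotonically until the loop locks at $1$, giving the constant-$1$ outcome. Otherwise the up-times never exceed $\Delta$, and the train either persists (the bounded infinite train) or the adversary drives a falling edge past the following rising edge, cancelling the train to constant $0$. Together these exhaust the three behaviors claimed in (iii).

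I expect the one genuinely delicate point --- already isolated in the supporting lemmas --- to be the \emph{no-return} property: once a pulse up-time surpasses $\Delta$, the adversary cannot steer the train back to smaller up-times. This is exactly what the strict bound $g'(x) > 1$ on $[\Delta, \oDelta]$ secures, and it is why constraint (C4) is imposed; everything else reduces to matching the case boundaries of the theorem to the hypotheses of the cited lemmas, so the final proof itself is essentially one line that combines them.
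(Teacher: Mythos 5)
Your handling of cases (i) and (ii), and of the resolution-to-$0$/$1$ and maximum up-time parts of case (iii), matches the paper's proof exactly: both reduce to \cref{lemma:longpulses,lemma:shortpulses,lemma:fixedpoint,thm:mediumpulses}. However, there is a genuine gap: claim (iii) also asserts a \emph{maximum duty cycle} of $\gamma=\Delta/\tau<1$ for \emph{every} possible infinite pulse train, and your proposal never proves this. You only observe that the critical fixed-point train has duty cycle $\Delta/\tau$, but the adversary is not obliged to reproduce that particular train: a priori it could generate a train whose up-times stay below $\Delta$ while the down-times shrink, pushing the duty cycle above $\gamma$. Bounding the up-times alone does not bound the duty cycle; one also needs a lower bound on the down-times, and that is precisely the extra argument the paper supplies.

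The paper closes this hole by contradiction, using the period relation \cref{def:periods-new}: since $\cP_n=\cDelta_n+\Delta_n=\ddo(-\cDelta_n)-\etm(-\cDelta_n)$, and since $\ddo(\cdot)$ is increasing and $\etm(-\cDelta_n)\geq\etmmin$, any down-time $\cDelta_n<P-\Delta$ would force $\cP_n>\ddo(-P+\Delta)-\etmmin=P$, and hence $\Delta_n=\cP_n-\cDelta_n>\Delta$, contradicting the up-time bound $\Delta_n\leq\Delta$ already established. Thus every down-time satisfies $\cDelta_n\geq P-\Delta$, which yields duty cycle at most $\Delta/P=\Delta/\tau=\gamma$. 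This bound is not decorative: the high-threshold buffer implementation via exp-channels from the original \etam\ paper needs it to map all surviving infinite pulse trains to the zero signal (only the \cidm-based buffer of \cref{lemma:htb} dispenses with it). You should add this down-time argument to complete case (iii); everything else in your proposal is sound and follows the paper's route.
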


\begin{proof}
	By combining \cref{lemma:longpulses,lemma:shortpulses,lemma:fixedpoint,thm:mediumpulses},
	the resolution to 1 resp.\ the maximum up-time bound $\Delta$ follow
        immediately. To also confirm the upper bound on the duty cycle, we show
        by contradiction
        that no infinite pulse train can contain a pulse with a down-time
        $\cDelta_n < P-\Delta$: Since $\cP_{n}=\cDelta_n+\Delta_n
        =\ddo(-\cDelta_n)-\etm(-\cDelta_{n})$ by \cref{def:periods-new}, it
        decreases with both $\cDelta_n$ and $\etm(-\cDelta_{n}) \geq \etmmin$.
        Hence, if $\cDelta_n < P - \Delta$ ever occurred, we would get
        $\cP_n > P' = \ddo(-P+\Delta) - \etmmin=P$. Hence, it follows that
        $\Delta_n = \cP_n-\cDelta_n > \Delta$, which 
        contradicts the upper bound $\Delta_n \leq \Delta$ established
        before.
\end{proof}

It only remains to properly choose the high-threshold buffer at the output 
of the OR gate. First, it must map infinite and decreasing pulse trains ($\Delta_n \leq 
\Delta$ for $n\geq 0$) according
to (iii) in \cref{thm:casedistinction} to the zero signal. Second, 
for an increasing pulse train (some $\Delta_n > \Delta$), it needs 
to be ensured that once the 
high-threshold buffer switches to one it never switches back. 
It has been shown in \cite[Lem.~3]{FMNNS18:DATE} that it is
possible to implement a high-threshold buffer with the required
properties with a simple exponential \idm\ channel
(an exp-channel, see \cite{OMFS21:Integration} for details),
which finally establishes:

\begin{theorem}
	There is a circuit that solves unbounded SPF in the extended \etam.
\end{theorem}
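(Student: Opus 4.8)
The plan is to assemble the full short-pulse filtration circuit of \cref{fig:spfcircuiteta} from the pieces already analyzed and then to check the four SPF constraints F1)--F4) one at a time. First I would fix concrete channel parameters for which constraints (C1)--(C4) hold simultaneously; such a choice exists. Taking $\rhp=\rhm=0$ collapses (C4) to $\dup'(-\Delta)>0$, which holds by strict monotonicity of $\dup$, and it still enlarges the bounds to $\etpinf,\etminf$ on the flat parts of \cref{def:eta} while leaving the definition of $\oDelta$ satisfiable (its defining inequality becomes $\oDelta\geq\dup(-\oDelta)+\etpmin$, which holds for all sufficiently large $\oDelta$ since $\dup(-\oDelta)$ decreases). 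Constraint (C3) then leaves ample room for large $\etpinf,\etminf$, as witnessed by the concrete inverter values reported after \cref{def:eta}. With the parameters fixed, the feedback channel $c_f$ driving the OR gate is exactly the configuration covered by \cref{thm:casedistinction}, so the intermediate signal $o_{or}$ is completely characterized by the three regimes (i)--(iii) there.

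The heart of the argument is then to choose the output buffer $c_{buf}$ so that it turns $o_{or}$ into a valid SPF output. I would invoke \cite[Lem.~3]{FMNNS18:DATE}, which shows that a high-threshold buffer realizable as a single exp-channel can simultaneously (a) map every pulse train whose up-times never exceed $\Delta$ (hence, by case~(iii) of \cref{thm:casedistinction}, every bounded or decreasing train, which additionally has duty cycle at most $\gamma=\Delta/\tau<1$) to the constant-$0$ signal, and (b) once its input has locked at $1$, emit a single rising transition and never switch back. The point that makes this reuse legitimate is \cref{lemma:fixedpoint}: the extended, $T$-dependent bounds reproduce exactly the same critical up-time $\Delta$, period $\tau$, and duty cycle $\gamma$ as the original model, so the buffer parameters chosen in \cite{FMNNS18:DATE} carry over verbatim.

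It then remains to read off F1)--F4). Well-formedness F1) is immediate from the single-input/single-output topology of \cref{fig:spfcircuiteta}. No generation F2) follows because an all-zero input keeps both the OR gate and the buffer at $0$. Nontriviality F3) is witnessed by regime~(i): a long input pulse drives $o_{or}$ to a sustained $1$, whereupon property~(b) of the buffer yields a single rising output transition, i.e.\ a nonzero output. For the no-short-pulses property F4) I would argue per regime: in~(i) and in the growing-train subcase of~(iii) the feedback loop locks at $1$ (\cref{thm:mediumpulses}) and property~(b) leaves no falling transition, hence no pulse at all after the first edge; in the bounded-train subcase of~(iii) property~(a) suppresses the output entirely; and in~(ii) the lone surviving pulse is filtered by the same high threshold, leaving at most one well-separated output pulse. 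In every regime the output therefore contains no pulse shorter than the $\varepsilon$ guaranteed by the buffer's threshold, establishing F4).

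The step I expect to be the main obstacle is part~(a) of the buffer specification, namely certifying that the exp-channel suppresses \emph{all} admissible bounded trains of regime~(iii) and not merely the single fixed-point train. This is exactly where \cref{lemma:monincreasing} (pulses above $\Delta$ never return) and the duty-cycle bound $\gamma<1$ of \cref{thm:casedistinction} are essential: together they confine every non-locking train to up-times at most $\Delta$ and down-times bounded away from $0$, which is precisely the regime in which \cite[Lem.~3]{FMNNS18:DATE} guarantees suppression. Combining the three regimes with the two buffer properties then yields the claim.
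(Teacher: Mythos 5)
Your overall route is the same as the paper's: instantiate the constraints, invoke \cref{thm:casedistinction} to classify the possible signals at $o_{or}$, and then plug in the exponential-channel high-threshold buffer of \cite[Lem.~3]{FMNNS18:DATE}, using \cref{lemma:fixedpoint} to argue that the critical parameters $\Delta$, $\tau$, $\gamma$ are unchanged so that the buffer construction carries over. Your parameter instantiation ($\rhp=\rhm=0$, large $\oDelta$) is legitimate. The gap lies in your reconstruction of what the buffer must guarantee: your properties (a) and (b) are strictly weaker than what the paper requires, and with only (a) and (b) the verification of F4 fails.

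Concretely, property (a) suppresses only trains whose up-times never exceed $\Delta$, and property (b) constrains the buffer only \emph{after} its input has locked at $1$. This leaves two classes of admissible $o_{or}$ signals unconstrained. First, the isolated pulse of regime (ii): its length can be anything up to $\dupinfty-\dmin-\etpinf-\etminf$, which typically far exceeds $\Delta$ (recall $\Delta<\dmin$), so it is not covered by (a). Your claim that it is ``filtered by the same high threshold, leaving at most one well-separated output pulse'' cannot be salvaged as stated: \etam-channels are continuous (this continuity is the very basis of IDM faithfulness), so if the buffer emits an output pulse for \emph{some} pulse lengths in the regime-(ii) continuum and not for others, it necessarily emits arbitrarily short output pulses for near-threshold lengths, violating F4. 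The only admissible behavior is that \emph{all} regime-(ii) pulses are mapped to zero, which forces the suppression threshold to be at least the regime-(ii) bound, not $\Delta$. Second, the growing phase of a locking train in regime (iii): its pulses have up-times in $(\Delta,\oDelta)$, so they are covered neither by (a) nor by (b) (the input has not yet locked), and your spec permits the buffer to emit short pulses there. This is exactly why the paper states the second required property differently: for an increasing pulse train, once the buffer output switches to $1$ --- possibly \emph{before} the input locks --- it must never switch back. With the paper's two requirements (suppression up to a freely choosable threshold $\Theta$, cf.\ the ``for any $\Theta>0$'' formulation of \cref{lemma:htb}, together with the no-switch-back property for increasing trains), your case analysis of F1)--F4) goes through; with (a) and (b) as you state them, it does not.
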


\section{The \etacidm}\label{sec:etacidm}

In this section, we will sketch how the extended \etam\ and the 
Composable Involution Delay Model (\cidm) \cite{MOSFN21:GLSVLSI}
can be seamlessly combined. The \cidm\
generalizes the \idm\ by adding a transition-dependent pure-delay shifter 
$\Deltapm$ (with $\Delta^+$ resp.\ $\Delta^-$ affecting $\dup(.)$ resp.\ $\ddo(.)$ only) 
at each input of a gate, and re-ordering/changing the different internal components
of \idm\ channels, as shown in \cref{fig:spfcircuitetacidm} for the SPF circuit in \cref{fig:spfcircuiteta}.
The pure-delay shifter $\Deltapm$ can be used to account for 
threshold voltage variations between gates, which considerably
decreases the circuit characterization effort: a single threshold voltage
like $\vth=\vdd/2$ can be used for all gates of a circuit.
Moreover, the pure-delay shifter makes it easy to implement high-threshold 
buffers with arbitrary threshold voltages.

\begin{figure}[t]
	\centering	
	\ifthenelse{\boolean{conference}}
	{
		\includegraphics[width=1\linewidth]{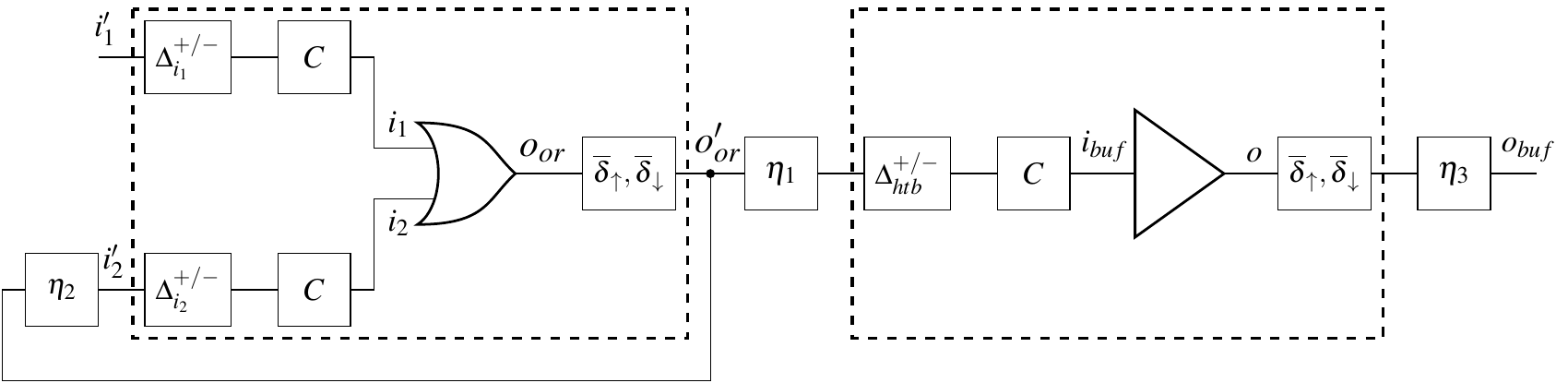}
	}
	{
		\includegraphics[width=1\linewidth]{unbounded_spf_circuit_etacidm2.pdf}
	}
	\caption{The SPF circuit of \cref{fig:spfcircuiteta} implemented by means
of \cidm\ channels. The internal component $\Deltapm$ resp.\ $C$ represent the pure delay 
shifter resp.\ the cancellation unit,
which removes out-of-order transitions. $\bdup,\bddo$ is a standard \idm\ channel without cancellation unit.}
	\label{fig:spfcircuitetacidm}
\end{figure}

\cidm\ channels are characterized by the delay functions
\begin{align}		
	\dup(T) &= \Delta^+ + \bdup(T + \Delta^+) \ \textrm{and}\label{eq:cidmup}\\
	\ddo(T) &= \Delta^- + \bddo(T + \Delta^-),\label{eq:cidmdo}
\end{align}
which do not satisfy the involution property. Still, it turned out that
the result of concatenating the tail (the \idm\ channel $\bdup,\bddo$) 
of the \cidm\ of a predecessor gate with the head (the pure delay shifter 
$\Deltapm$ and the cancellation unit $C$) of the \cidm\ of the successor gate 
\emph{does} form a regular \idm\ channel (albeit with delay functions that are usually
different from $\bdup(.)$ and $\bddo(.)$). Consequently, it is possible to
transfer results obtained for circuits made up of \idm\ channels to 
circuits made up of \cidm\ channels. In particular, this is true for our
extended \etam, as the SPF circuit shown in \cref{fig:spfcircuitetacidm}
is indeed equivalent (modulo some pure delay shifts) to 
\cref{fig:spfcircuiteta}.

In the \etacidm, a \cidm-channel with an appropriately chosen
$\Delta^+$ and $\Delta^-$
can be used to replace the exp-channel-implementation of the high-threshold buffer used in \cite{FMNNS18:DATE}:

\begin{lemma}\label{lemma:htb}
	For any $\Theta > 0$, the high-threshold 
	buffer in \cref{fig:spfcircuitetacidm} can be implemented by a strictly 
	causal \cidm-channel, which ensures that all pulses (at $o_{or}$)
        with up-time $\Delta_n \leq \Theta$ are canceled and the zero signal is
        generated at $i_{buf}$ and hence $o_{buf}$.
\end{lemma}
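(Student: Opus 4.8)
The plan is to realize the high-threshold buffer as a \cidm-channel whose pure-delay shifter is deliberately made asymmetric, so that rising transitions are retarded relative to falling transitions by exactly $\Theta$. Concretely, I would instantiate the head of the channel with $\Delta^- = 0$ and $\Delta^+ = \Theta$ (hence $\Delta^+ - \Delta^- = \Theta$), and take for the tail $\bdup,\bddo$ any strictly causal \idm\ channel. The guiding intuition is that, according to \cref{fig:spfcircuitetacidm}, the shifter maps an up-interval $[r_n,f_n]$ of a pulse at $o_{or}$ (with rising transition at $r_n$ and falling transition at $f_n$) to $[r_n+\Delta^+,\, f_n+\Delta^-]$, i.e.\ it shrinks every up-interval by $\Delta^+-\Delta^-=\Theta$ while leaving every down-interval non-negative in length.

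The central step is the cancellation argument. For a pulse at $o_{or}$ with up-time $\Delta_n = f_n - r_n \leq \Theta$, the shifted transitions land at $r_n+\Delta^+$ and $f_n+\Delta^-$, and the choice $\Delta^+ - \Delta^- = \Theta \geq \Delta_n$ yields $r_n+\Delta^+ \geq f_n+\Delta^-$. Thus the shifted rising transition does not precede its matching shifted falling transition, so the corresponding up-interval has non-positive length. By the semantics of the cancellation unit $C$, which discards exactly such out-of-order transition pairs, this pulse leaves no trace at $i_{buf}$. Since the argument applies verbatim to every pulse of a (possibly infinite) train whose up-times are all $\leq \Theta$, none of them survives, and the signal at $i_{buf}$ is constant $0$.

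Finally, because the \idm\ tail channel $\bdup,\bddo$ generates no output transition in response to a constant input, the zero signal at $i_{buf}$ propagates to the zero signal at $o_{buf}$. Strict causality of the composed channel then follows directly from \cref{eq:cidmup,eq:cidmdo}: $\dup(0) = \Delta^+ + \bdup(\Delta^+) = \Theta + \bdup(\Theta) > 0$ and $\ddo(0) = \Delta^- + \bddo(\Delta^-) = \bddo(0) > 0$, using $\Theta > 0$, monotonicity of $\bdup$, and strict causality of the tail. I expect the only delicate point to be the cancellation step: one must ensure that the reordering of transitions induced by the asymmetric shift cannot recombine the falling transition of one pulse with the rising transition of an \emph{adjacent} pulse into a spurious surviving up-interval. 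This is settled by appealing to the formal definition of $C$ from \cite{MOSFN21:GLSVLSI}, under which the logical value at any time is determined solely by the union of the (here all empty) shifted up-intervals, so that no reordering of neighboring transitions can ever produce a logical $1$.
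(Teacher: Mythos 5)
Your core mechanism---an asymmetric pure-delay shifter that turns every short up-pulse into an out-of-order transition pair, which the cancellation unit $C$ then removes---is exactly the mechanism the paper uses. However, your quantitative instantiation $\Delta^+ - \Delta^- = \Theta$ fails, because you treat the buffer's shifter as if it acted directly on the pulses at $o_{or}$ and, more importantly, you ignore the adversarial delay variations that are attached to every channel in the \etacidm. Two effects are missing. First, in \cref{fig:spfcircuitetacidm} the buffer does not tap the signal at $o_{or}$ but at $o_{or}'$, i.e., after the OR gate's output \idm-channel and \emph{before} the feedback path's adversary $\eta_2$ and shifter $\Delta_{i_2}^{+/-}$; since the pulse-train bound $\Delta_n \leq \Theta$ is established for the pulses as seen in the feedback loop, back-tracking through $\eta_2$ and $\Delta_{i_2}^{+/-}$ shows that the corresponding up-pulse at $o_{or}'$ can be as long as $\Delta_n + \Delta_{i_2}^+ + {\etpinf}_{,2} + {\etminf}_{,2} - \Delta_{i_2}^-$. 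Second, before reaching your shifter the pulse still passes the buffer's own adversary $\eta_1$, which can advance the rising transition by up to ${\etminfty}_{,1}$ and retard the falling one by up to ${\etpinf}_{,1}$, stretching the up-pulse by another ${\etpinf}_{,1} + {\etminfty}_{,1}$. Hence a pulse with $\Delta_n \leq \Theta$ can arrive at your shifter with length strictly larger than $\Theta$; its shifted transitions are then \emph{not} out of order, they survive $C$, and they produce a nonzero signal at $i_{buf}$---contradicting the lemma.

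The paper's proof is your idea with the necessary slack built in: it requires $\Theta' + \Delta^-_{htb} + {\etpinf}_{,1} \leq \Delta^+_{htb} - {\etminfty}_{,1}$, where $\Theta' = \Theta + \Delta_{i_2}^+ + {\etpinf}_{,2} + {\etminf}_{,2} - \Delta_{i_2}^-$ accounts for the back-tracking to $o_{or}'$, and it satisfies this inequality by invoking the fact from \cite{MOSFN21:GLSVLSI} that $\Delta^+_{htb}$ may be chosen arbitrarily large while $\Delta^-_{htb}$ is only bounded from below by strict causality. That same fact is what your step ``set $\Delta^- = 0$'' implicitly relies on but does not justify. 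Your final observation that the zero signal at $i_{buf}$ propagates to $o_{buf}$ is fine; the paper additionally verifies that the trailing \idm-channel and the adversary $\eta_3$ do not disturb the circuit output $o$, which your proof omits but which lies beyond the literal statement of the lemma. To repair your proof, keep the construction but replace the exact choice $\Delta^+ - \Delta^- = \Theta$ by any pair satisfying the displayed inequality above.
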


\ifthenelse{\boolean{conference}}
{
}
{
	\begin{proof}
		The adversaries in the feedback loop ($\eta_2$) and between the OR gate 
		and 
		the high-threshold buffer ($\eta_1$) can add variations in the range 
		[$-{\etminfty}_{,2}, {\etpinf}_{,2}$] resp. [$-{\etminfty}_{,1}, 
		{\etpinf}_{,1}$]. 
		Note that we can neglect the $T$-dependency for the adversaries here 
		and 
		just use the maximum possible delay variation.
		By back-tracking the pulse $\Delta_n$ through ${i_2}$, $i_{2}'$ and the 
		adversary
		$\eta_2$, the maximum length of the corresponding up-pulse at
		$o_{or}'$ can be determined as:
		\begin{equation}
			\Delta_n' = \Delta_n + \Delta_{i_2}^+ + {\etpinf}_{,2} + 
			{\etminf}_{,2} - 
			\Delta_{i_2}^-.
		\end{equation}
		It is achieved if the rising transition is maximally delayed and the 
		falling transition is minimally delayed by the adversary. Consequently,
		we obtain $\Delta_n' \leq \Theta'$ for $\Theta' = \Theta + 
		\Delta_{i_2}^+ + {\etpinf}_{,2} + {\etminf}_{,2} - \Delta_{i_2}^-$. It 
		thus suffices to choose
		$\Delta^+_{htb}$ and $\Delta^-_{htb}$ such that all up-pulses 
		$\Delta_n' \leq \Theta'$ are mapped to constant $0$, which is ensured if
		\begin{equation}\label{eq:htb}
			\Theta' + \Delta^-_{htb} + {\etpinf}_{,1} \leq \Delta^+_{htb} - 
			{\etminfty}_{,1}
		\end{equation}
		holds. Since it has been shown in \cite{MOSFN21:GLSVLSI}
		that $\Delta_{htb}^+$ can be chosen arbitrarily large (whereas
		$\Delta_{htb}^-$ is lower-bounded in order not to violate strict 
		causality
		of the \cidm\ channel), \cref{eq:htb} can always be satisfied. 
		Note that, unlike for the high-threshold buffer implementation
		with the exp-channels used in  \cite{FMNNS18:DATE}, it is easy 
		to determine the necessary parameters via \cref{eq:htb} here. Moreover,
		the upper bound $\gamma$ on the duty cycle established in
		\cref{thm:casedistinction} is not needed anymore.
		
		However, we still need to argue that the additional \idm-channel and
		the adversary $\eta_3$ at the output of the high-threshold buffer
		do not have adverse effects. Fortunately, for the output $o$,
		we only need to distinguish two possible cases: 
		(i) If there is no transition at $o$, then there is obviously also no 
		transition at $o_{buf}$.
		(ii) If there is a single rising transition at the output $o$ at time 
		$T$, 
		then this single transition is propagated to $o_{buf}$ at some time 
		$T+K$. The output $o$ is hence still in accordance with the behavior of 
		a high-threshold buffer.
	\end{proof}
}

We hence obtain our desired result:

\begin{theorem}
	There is a circuit that solves unbounded SPF in the \etacidm.
\end{theorem}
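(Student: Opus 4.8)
The plan is to take the \cidm\ realization of the SPF circuit depicted in \cref{fig:spfcircuitetacidm} and reduce its analysis to the already-established extended \etam\ results, rather than redoing the case analysis from scratch. The starting point is the composability property discussed above: since concatenating the tail of one \cidm\ channel with the head of the next yields a regular \idm\ channel, the OR gate together with its feed-back \cidm\ channel behaves --- modulo the pure-delay shifts $\Deltapm$ --- exactly like the OR gate with the feed-back \etam\ channel of \cref{fig:spfcircuiteta}. First I would make this equivalence precise and use it to carry the three-case classification of \cref{thm:casedistinction} over to the signal $o_{or}$ in \cref{fig:spfcircuitetacidm}: long input pulses drive the feedback loop to constant $1$, short input pulses leave only the (subsequently filtered) input pulse, and medium input pulses either resolve to a constant or produce an infinite pulse train whose up-time never exceeds $\Delta$ and whose duty cycle is bounded by $\gamma<1$.

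With this classification in hand, the second step is to discharge the four SPF requirements F1)--F4) by appealing to \cref{lemma:htb}. Well-formedness F1) and nontriviality F3) are immediate from the circuit structure and from the long-pulse case, respectively. For no-generation F2) and the no-short-pulses property F4), the decisive tool is the high-threshold buffer: choosing its threshold parameter as $\Theta=\Delta$, \cref{lemma:htb} guarantees that every up-pulse train at $o_{or}$ with up-time at most $\Delta$ --- which covers precisely the short-pulse subcase and the bounded-pulse-train subcase --- is cancelled and mapped to the zero signal at the output, whereas an increasing pulse train (some $\Delta_n>\Delta$) yields a single surviving rising transition. Hence the output is either identically zero or exhibits exactly one transition, so it contains no pulse of length less than some $\varepsilon>0$, and no output is produced when the input carries no pulse.

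The hard part will be making the composability reduction fully rigorous rather than merely plausible, and in particular checking that the extra ingredients on the output side of \cref{fig:spfcircuitetacidm} do not spoil the argument. Concretely, \cref{lemma:htb} already folds the additional adversaries $\eta_1,\eta_3$ and the extra \idm\ channel into the buffer analysis by enlarging the effective threshold to $\Theta'$ and by observing that a single rising transition is merely delayed, never split; I would lean on exactly this accounting and verify that the parameter inequality of \cref{lemma:htb} can be met simultaneously with the strict-causality requirement on the \cidm\ channel. Once the equivalence and these buffer parameters are fixed, the theorem follows by combining \cref{thm:casedistinction} with \cref{lemma:htb}, yielding a concrete \etacidm\ circuit that solves unbounded SPF.
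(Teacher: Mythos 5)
Your overall strategy matches the paper's: reduce the \cidm\ circuit to the extended \etam\ analysis via composability, invoke \cref{thm:casedistinction} for the classification of the signal at $o_{or}$, and let \cref{lemma:htb} absorb the extra adversaries $\eta_1,\eta_3$ and the trailing \idm\ channel. However, your choice of the buffer threshold $\Theta=\Delta$ is a genuine gap, and it breaks the argument in two places. First, in the short-pulse case the signal at $o_{or}$ is the input pulse itself, whose width can be anything up to $\dupinfty - \dmin - \etpinfty - \etminfty$; since $\Delta < \dmin$, this bound typically exceeds $\Delta$, so a buffer that only cancels pulses of width at most $\Delta$ does not map these inputs to the zero signal --- your claim that the short-pulse subcase is covered by up-times at most $\Delta$ is unjustified. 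Second, and more seriously, in the medium case with an eventually increasing pulse train the growth from $\Delta$ to $\oDelta$ is gradual: \cref{lemma:newboundsderivative,lemma:monincreasing} only guarantee a finite expansion factor per round, so before the feedback loop locks, $o_{or}$ exhibits (possibly many) complete pulses with widths strictly between $\Delta$ and $\oDelta$. With $\Theta=\Delta$ none of these is guaranteed to be cancelled; in particular, a pulse whose width exceeds $\Delta$ by an arbitrarily small $\epsilon$ can produce an arbitrarily short pulse at the circuit output, violating requirement F4. Your claim that an increasing pulse train ``yields a single surviving rising transition'' is exactly the step that fails.

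The repair is the choice the paper actually makes: take $\Theta=\oDelta$ (more precisely, $\Theta \geq \max\{\oDelta,\ \dupinfty - \dmin - \etpinfty - \etminfty\}$ so that the short-pulse case is covered as well), which is permissible because \cref{lemma:htb} holds for \emph{every} $\Theta>0$. Since the loop locks as soon as some pulse reaches width $\oDelta$ (\cref{lemma:feedbacklock}), every complete pulse appearing at $o_{or}$ before the lock has width below $\oDelta$ and is therefore cancelled by the buffer; after the lock, $o_{or}$ is constantly $1$, so exactly one rising transition survives at the output. With this single correction, the remainder of your argument goes through and coincides with the paper's proof.
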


\ifthenelse{\boolean{conference}}
{
}
{
	\begin{proof}	
		If $\Delta_0 \leq \dupinfty - \dmin - \etpinfty - \etminfty$, 
		then the output $o_{or}$ only contains the input pulse $\Delta_0$, 
		according to \cref{thm:casedistinction}. 
		By \cref{lemma:htb}, we know that we can choose $(\Delta^+_{htb}, 
		\Delta^-_{htb})$ such that this pulse $\Delta_0$ is mapped 
		to a constant $0$.
		
		If $\Delta_0 \geq \dupinfty + \etpinfty$, then the output $o_{or}$ has 
		a 
		single rising transition at $t = 0$. 
		Eventually, the output of the high-threshold buffer also has a single 
		transition to $1$.
		
		If $\Delta_0$ is in between the above ranges, two cases can be 
		distinguished: 
		(i) Suppose that the up-pulse lengths never exceed $\Delta$, then
		\cref{lemma:htb} for any $\Theta \geq \Delta$ guarantees that the input 
		is mapped to a constant $0$ at the output of the high-threshold buffer.
		For the other case (ii), the pulse length of the pulse train 
		(eventually) 
		exceeds $\Delta$. 
		By \cref{thm:mediumpulses} it is guaranteed that the feedback loop will 
		eventually lock at constant $1$. 
		If we choose $(\Delta^+_{htb}, \Delta^-_{htb})$ such that all pulses 
		with a 
		width at most $\Theta=\oDelta\geq\Delta$
		are mapped to a constant $0$, we can ensure that 
		the high-threshold buffer has exactly one rising transition after the 
		feedback loop has locked.
	\end{proof}	
}

\section{Simulation results}\label{sec:results}

In this section, we will show\footnote{Due to space constraints, we can only provide
a few of our experimental results.} that the delay predictions of our new \etacidm\ 
cover the actual delays of an inverter chain under substantial PVT variations 
and even under aging.

For our evaluation, we used Cadence Spectre (version 20.1) \cite{Cad20} to 
simulate 
a 7-stage inverter chain in UMC \SI{28}{\nm} technology (G-05-LOGIC/MIXED\_MODE28N-HPC-SPICE). Since their model cards not only offer parameter sets
for different process corners, i.e., different combinations of fast and slow n-MOS resp. p-MOS transistors, but also parameters for simulating \emph{Negative Bias Temperature Instability} (NBTI), \emph{Positive Bias Temperature Instability} (PBTI) and 
\emph{Hot Carrier Injection} (HCI), we could use the RelXpert tool \cite{Spectre:RelXpert:20.1}
to compute
aged versions of the circuit (with transistors suffering from 
threshold voltage shifts and gate oxide damages). 
Since we did not have a gate library for the used \SI{28}{\nm} 
technology, however, we had to design our own inverter cells.
We choose $W_{DES} = \SI{0.35}{\um}$ and $L_{DES} = \SI{0.035}{\um}$, which are 
within the specified range according to the data sheet and between the values 
for comparable \SI{15}{\nm} and \SI{45}{\nm} technologies. 

To compare the delays of the actual circuit and the predictions of our model, 
we had to perform the following steps:
\begin{itemize}
	\item[(1)] Obtain the delay functions $\dupsdef(T)$ and $\ddosdef(T)$ via analog 
	simulations under the default environment (\SI{25}{\degreeCelsius}, 
	$\vdd=\SI{0.9}{\volt}$).
	\item[(2)] Extract the parameters for our delay model, based on $\dupsdef(T)$ 
	and $\ddosdef(T)$.
	\item[(3)] Run analog simulations under PVT variations and aging, and obtain the
        appropriate delay functions $\dupsx{x}(T)$ and $\ddosx{x}(T)$, where 
{\small
        \[
x \in 
\begin{cases}
20a & \mbox{20 years aging},\\
\SI{85}{\degreeCelsius} & \mbox{operating temperature},\\
ss & \mbox{slow nMOS + slow pMOS process corner}, \\
+\SI{10}{\percent}\vdd & \mbox{supply voltage}.
\end{cases}
\]
}
	\item[(4)] Compare the old and the new ``corridor'' of the \etacidm, i.e., its
deterministic delay prediction + the allowed variation range, with the actually measured delay 
functions.
\end{itemize}

To characterize our circuit in steps (1) and (2), i.e., to determine
the parameters $\dmin, \Delta^+$ and $\Delta^-$ and the actual delay function $\dup(.)$ and
$\ddo(.)$ of every gate, 
we used\footnote{The publicly available Involution Tool \cite{OMFS21:Integration} provides characterization scripts, which we extended appropriately to perform
the procedure described below fully automatically.} the approach from 
\cite{MOSFN21:GLSVLSI}. It is illustrated in \cref{fig:characterization_idea}: 
First, running analog simulations for each 
pulse width, a binary search on the input pulse width is used until the output 
trajectory for the falling (green) and rising (orange) transition both touch 
$\vthoutm$. Then, the rising input and the corresponding output trajectory is moved 
in time until the rising and falling output trajectories touch each other at 
$\vthoutm$. Now, $\dmin=t_o-t_i$ is just the time between the touching point 
$t_o$ of the output trajectories and the crossing point $t_i$ of the input 
trajectories. $\Delta^+$ resp.\ $\Delta^-$ is determined as the time for the rising resp.\ 
falling input trajectory to get from $\vthin$ to $\vthinm$. 

\begin{figure}[t]
	\centering
	\ifthenelse{\boolean{conference}}
	{
		\includegraphics[width=0.85\linewidth]{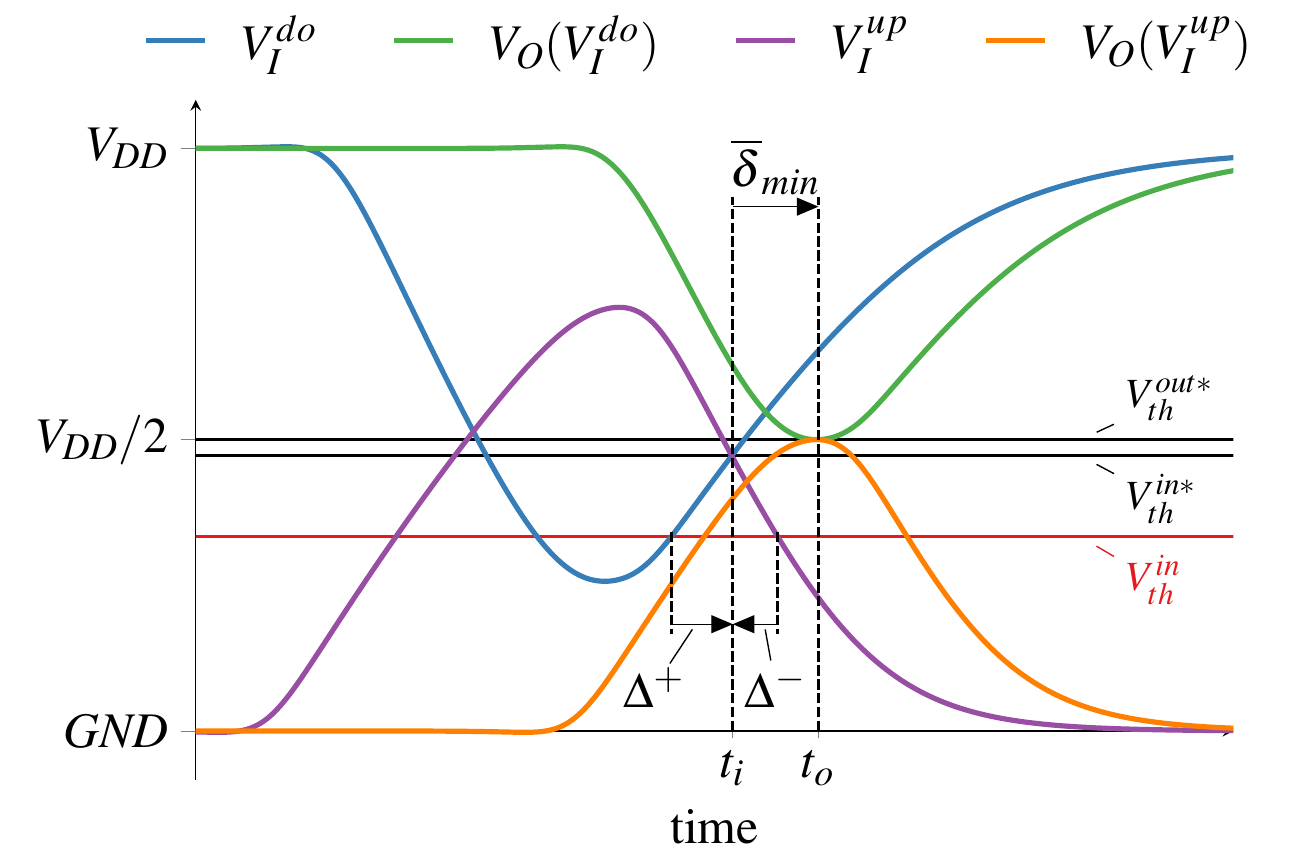}
	}
	{
		\includegraphics[width=1.0\linewidth]{buffer_vthin.pdf}
	}
	\caption{Characterization idea for a buffer (taken from 
		\cite{MOSFN21:GLSVLSI}).}	
	\label{fig:characterization_idea} 
\end{figure}

In order to determine $\dup(.)$ and $\ddo(.)$, we first measured the actual
delays, by sending longer pulses through the circuit and recording
the pair $(T, \delta(T))$ for each input/output trajectory.
Then, the parameters of a SumExp-\idm\ channel
\cite{OMFS21:Integration} were fitted to these empirical delay functions
using a least squares approach.
For the adversary, we choose symmetric parameters, i.e., $\etpmin = \etmmin, 
\etpinf = \etminf$ and $\rho^+ = \rho^-$.

For step (3), we had to repeat the characterization of the circuit for the
PVT variations and aging considered.

In step (4), all simulated delay functions were compared to the corridor of
the delay predictions of the \etacidm: \cref{fig:results_inv4-inv5_delta_up_sumexp_default} 
shows the results for the rising delay function between the fourth and fifth 
inverter of the circuit.
The delay function $\dupsdef(T)$ (blue) is the baseline, obtained via analog 
simulations. The deterministic delay function $\dup(T)$ of the \etacidm\ 
(red) has been least squares fitted to $\dupsdef(T)$. 
The old (narrow) corridor is shown in green, whereas the significantly larger 
new corridor is lilac.

It is immediately apparent that it was already difficult to fit a channel to the 
baseline (without PVT variations and aging) when using the old bounds 
(compare \cite[Fig.~9]{FMNNS18:DATE}), whereas this is 
absolutely no problem for our bounds. 

\begin{figure}[t]
	\centering
	\ifthenelse{\boolean{conference}}
	{
		\includegraphics[width=1\linewidth]{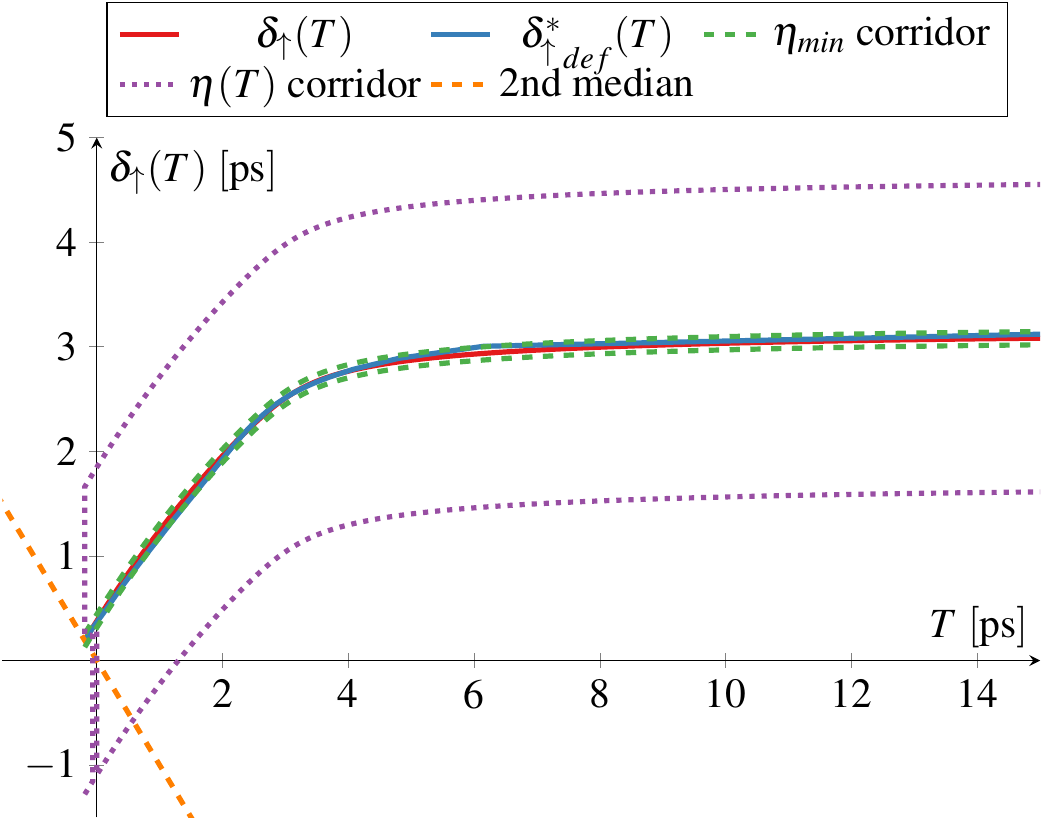}
	}
	{
		\includegraphics[width=1\linewidth]{results_inv4-inv5_delta_up_sumexp_default_dmin.pdf}
	}
	\caption{Actual (measured) delay function $\dupsdef(T)$ between the fourth 
		and fifth inverter of the simulated inverter chain.
		It is compared with the calculated delay function $\dup(T)$.}
	\label{fig:results_inv4-inv5_delta_up_sumexp_default} 
\end{figure}

\cref{fig:results_inv4-inv5_delta_up_sumexp_variations} shows the 
PVT variations and aging results for the rising delay function between the 
fourth and fifth inverter.
Note that the results for the other gates in the circuit are 
comparable.
The delay function with 10~\% increased supply voltage is obviously faster than 
the one under the default environment.
For an increased temperature of \SI{85}{\celsius}, the resulting delay 
function is only slightly slower.
The same also holds for the circuit that is aged by 20 years.
\cref{fig:results_inv4-inv5_delta_up_sumexp_variations} also
shows the slow down for 
the process corner \emph{ss}, i.e. a slow n-MOS and a slow p-MOS 
transistor.
It is apparent that all these PVT variations and aging are covered
by our new model.

\begin{figure}[t]
	\centering
	\ifthenelse{\boolean{conference}}
	{
		\includegraphics[width=1\linewidth]{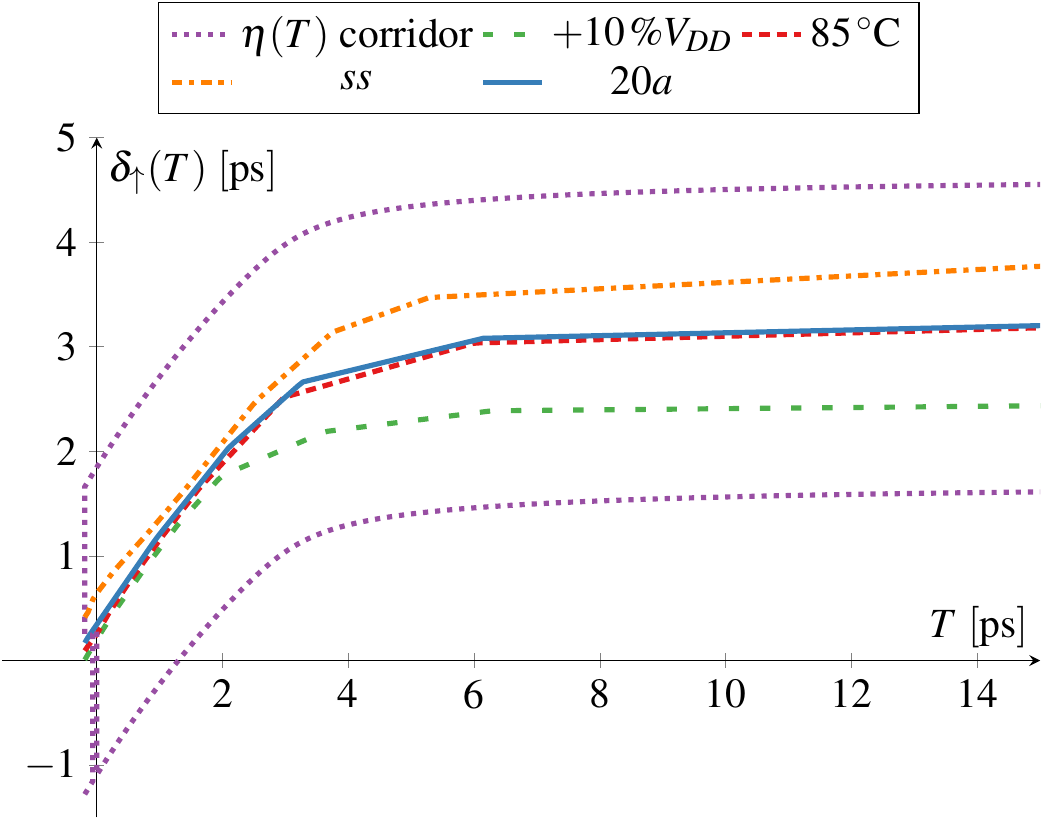}
	}
	{
		\includegraphics[width=1\linewidth]{results_inv4-inv5_delta_up_sumexp_dmin.pdf}
	}
	\caption{Coverage of various PVT variations and aging under the new bounds 
	of the \etam.}
	\label{fig:results_inv4-inv5_delta_up_sumexp_variations} 
\end{figure}

To complement the above results, we also evaluated the average coverage 
of the \etacidm\ for the entire circuit.
More specifically, we considered both $\dup(T)$ and $\ddo(T)$ for all seven 
inverters and numerically integrated over the difference between 
the measured delays and the respective corridor, for every recorded value of $T$.
If the delay function is inside the corridor, the contribution to the integral 
is $0$, otherwise it is the distance to the closest corridor border.
By normalizing the resulting value to the integration interval, we obtain 
a value that can be viewed as the average coverage violation, i.e., the
average time deviation from the corridor.

\cref{fig:results:results_total_values_vdd_persample} shows the results for 
supply voltage variations. 
It can be seen that the average deviation is significantly larger for the old 
bounds, whereas there is almost no deviation for the new bounds.  
A significant deviation is observed only for \SI{-20}{\percent} $\vdd$, 
which is unlikely to happen in modern VLSI circuits.
\begin{figure}[t]
	\centering
	\ifthenelse{\boolean{conference}}
	{
		\includegraphics[width=0.85\linewidth]{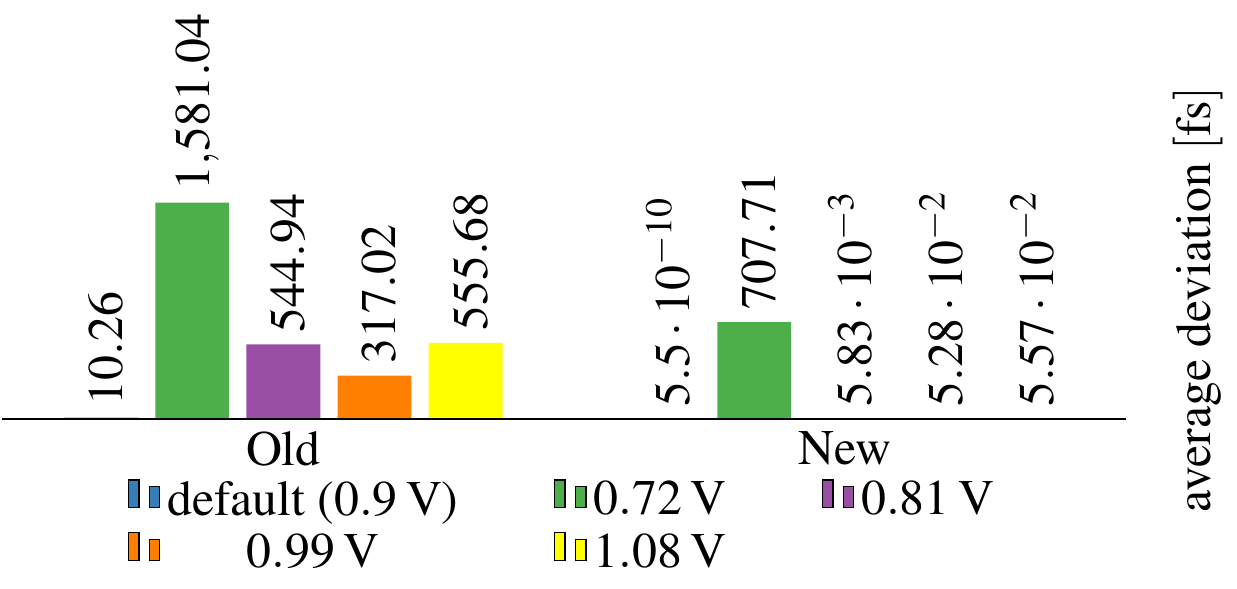}
	}
	{
		\includegraphics[width=0.9\linewidth]{results_total_values_vdd_integral.pdf}
	}
	\caption{Average deviation for different supply voltages ($\vdd$).}
	\label{fig:results:results_total_values_vdd_persample} 
\end{figure}

Like in the \etam-paper \cite{FMNNS18:DATE}, we also considered transistor
width variations. Again, \cref{fig:results:results_total_values_pvt2_persample} reveals 
that our new bounds perfectly cover even substantial variations, unlike the
old bounds, which result in significant deviations (albeit not as large as for 
$\vdd$ variations).

\begin{figure}[t]
	\centering
	\ifthenelse{\boolean{conference}}
	{
		\includegraphics[width=0.85\linewidth]{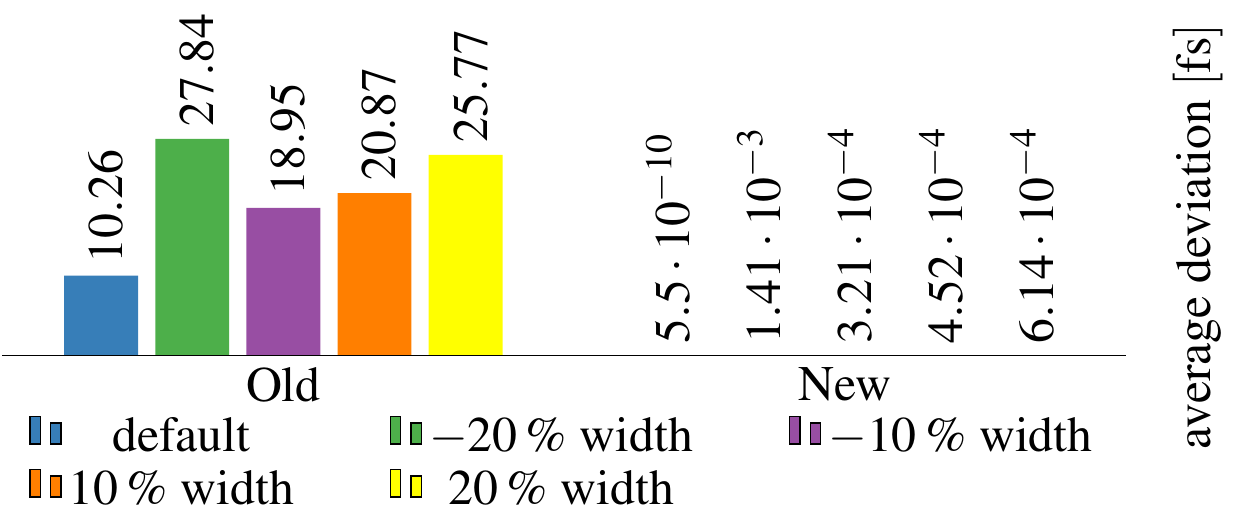}
	}
	{
		\includegraphics[width=0.9\linewidth]{results_total_values_pvt2_integral.pdf}
	}
	\caption{Average deviation for different values of $W_{DES}$.}
	\label{fig:results:results_total_values_pvt2_persample} 
\end{figure}

Overall, our results confirm that the \etacidm\ indeed surpasses the coverage of 
the original \etam\ of \cite{FMNNS18:DATE} substantially. 

\section{Conclusion}\label{sec:conclusion}

We provided a new unbounded single-history delay model, the \etacidm, which supports
a large range of adversarial delay variations.
It substantially improves on the existing \etam, by making the delay variation 
range dependent on every transitions' particular input-to-previous-output time. 
We proved analytically that this extension does not invalidate the implementability of
SPF, hence preserves faithfulness. By means of exhaustive simulations,
we showed that our \etacidm, unlike the original \etam, can cover a substantial range 
of PVT variations and aging.

\bibliographystyle{IEEEtran}
\bibliography{mybib}

\end{document}